\documentclass[aps,showpacs,pra,twocolumn,superscriptaddress]{revtex4-1}
\usepackage[utf8]{inputenc}
\usepackage{braket}
\usepackage{bbm}
\usepackage{amsmath}
\usepackage{amssymb}
\usepackage{amsthm}
\newtheorem{thm}{Theorem}

\usepackage{graphicx} 
\usepackage[caption=false]{subfig}
\usepackage{float}
\usepackage[a4paper, total={6.5in, 10in}]{geometry}
\usepackage{xcolor}
\usepackage{mathtools}
\usepackage[colorlinks=true,citecolor=blue,linkcolor=blue,urlcolor=blue]{hyperref}
\usepackage{tabularx}

\theoremstyle{plain}

\theoremstyle{definition}

\newcommand{\aqa}{$\langle aQa ^L\rangle $ Applied Quantum Algorithms, Universiteit Leiden}
\newcommand{\lorentz}{Instituut-Lorentz, Universiteit Leiden, Niels Bohrweg 2, 2333 CA Leiden, Netherlands}
\newcommand{\liacs}{LIACS, Universiteit Leiden, Niels Bohrweg 1, 2333 CA Leiden, Netherlands}

\begin{document}

\setlength{\abovedisplayskip}{6pt}
\setlength{\belowdisplayskip}{6pt}
\setlength{\abovedisplayshortskip}{6pt}
\setlength{\belowdisplayshortskip}{6pt}

\title{Stoquastic permutationally invariant Bell operators}

\author{Jan Li}
\affiliation{\aqa}
\affiliation{\lorentz}

\author{Owidiusz Makuta}
\affiliation{\aqa}
\affiliation{\lorentz}

\author{Evert van Nieuwenburg}
\affiliation{\aqa}
\affiliation{\lorentz}
\affiliation{\liacs}

\author{Jordi Tura}
\affiliation{\aqa}
\affiliation{\lorentz}

\date{\today}

\begin{abstract}
As Hermitian operators, many-body Bell operators can naturally be identified as many-body Hamiltonians.
An important subclass of such Hamiltonians is the stoquastic class, characterized by having nonpositive off-diagonal matrix elements in a given basis.
Interestingly, this property is shared by the permutationally invariant (PI) Bell operators underlying the largest Bell-correlation experiments to date.
In this work, we explore the connection between many-body PI Bell operators and stoquasticity. 
We introduce the stoquasticity cone, which allows for a full characterization of the stoquastic parameter regimes for any PI Bell operator. 
We use this to show that PI Bell operators of the binary-input binary-output scenario consisting of at most three-body correlators can always be rendered stoquastic for any set of measurement parameters. 
Additionally, we also provide examples that use the stoquasticity cone to optimize for the quantum-classical gap.
Numerical evidence suggests that the Bell operator used in the largest experiments to date is optimal with respect to stoquasticity.  
To the best of our knowledge, this work establishes the first connection between PI Bell operators and stoquasticity.
\end{abstract}

\maketitle

\section{Introduction}

Bell nonlocality has been of foundational and philosophical interest in physics~\cite{bell_einstein_1964} and it also serves as the key resource for tasks such as quantum key distribution~\cite{primaatmaja_security_2023}, randomness amplification and expansion~\cite{colbeck_free_2012, pironio_random_2010,vazirani_certifiable_2012}, and self-testing~\cite{supic_self-testing_2020}. Moreover, such tasks can be framed within the device-independent paradigm~\cite{acin_certified_2016,gallego_device-independent_2010}, which allows for certification of quantum behavior from a minimal set of assumptions.
Central to the device-independent paradigm are nonlocal correlations. 
Experimental verification of such correlations relies on measuring the expected value of a Bell operator~\cite{brunner_bell_2014}  and checking whether it exceeds a classical threshold.

Extending Bell certification to the many-body regime is both timely and challenging. On one hand, recent experimental progress in controlling large quantum systems~\cite{bluvstein_quantum_2022,mi_stable_2024} makes such certifications increasingly relevant. On the other hand, while multipartite Bell operators have been studied for a long time~\cite{mermin_extreme_1990,toth_two-setting_2006,aolita_fully_2012}, their construction and implementation pose a significant challenge due to the increased complexity of the scenario. 
Classes of Bell operators that are experimentally friendly have been developed in the last decade, often relying on low-order correlators and symmetric expressions~\cite{tura_nonlocality_2015,wang_entanglement_2017,tura_detecting_2014,tura_nonlocality_2015,collins_bell_2002,salavrakos_bell_2017,baccari_scalable_2020}.
Indeed, for Bell experiments in the largest system sizes to date, $480$ ${}^{87}\mathrm{Rb}$ atoms in a Bose-Einstein condensate~\cite{schmied_bell_2016} and $5\cdot 10^5$ atoms in a thermal ensemble~\cite{engelsen_bell_2017}, such Bell operators were used.

Demonstrating Bell correlations in systems of this size requires preparing many-body entangled states that are both sufficiently entangled and experimentally accessible with current technology. 
\textit{Stoquastic} Hamiltonians, which are Hamiltonians with only nonpositive off-diagonal elements for a given basis, may be particularly well-suited for this purpose. 
Stoquasticity was introduced in~\cite{bravyi_merlin-arthur_2006} from a complexity-theoretic point of view. 
This property was used in Ref.~\cite{bravyi_complexity_2010} to find the first natural complete problem of the classical complexity class MA, which is a probabilistic generalization of NP. 
Another interesting feature is that stoquastic Hamiltonians do not have the sign problem in Monte Carlo sampling methods~\cite{troyer_computational_2005,bravyi_complexity_2008}, making them more amenable to classical simulation. 
Later, in a similar spirit, it was shown that tensor network contractions become easier on average when there is a positive bias in the entries~\cite{jiang_positive_2024,chen_sign_2025}.
Additionally, it has been shown that Hamiltonians without the sign problem potentially allow for a subexponential speed up in adiabatic quantum computing compared to classical computing~\cite{gilyen_subexponential_2021, hastings_power_2021}. 

Interestingly, it has so far gone unnoticed that stoquasticity is also related to nonlocality. 
More precisely, the Bell operators used in the aforementioned experiments~\cite{schmied_bell_2016,engelsen_bell_2017} can be made stoquastic using solely local unitary transformations. 
This raises the natural question of how general this property is and which Bell correlations remain accessible when restricting to stoquastic operators, which is the central question of this work.

In this work, we aim to characterize the conditions under which Bell operators are stoquastic in the locally binary-input and binary-output scenario.
We focus on a class of permutationally invariant (PI) Bell operators, which are characterized by Bell coefficients $\vec{\alpha}$ and the measurement parameters $\vec{\theta}$, and we consider the case where the set of measurements is the same for all parties. 
In particular, we focus on the symmetric subspace, which is the subspace of 
states invariant under permutation of the $n$ qubits, as the ground state of two-body PI Hamiltonians is expected to lie there~\cite{fadel_bell_2018} and as the analysis in the other subspaces is qualitatively similar~\cite{tura_nonlocality_2015}.

Our first contribution is that for a class of PI Bell inequalities, the corresponding operators can always be made stoquastic under suitably chosen measurement parameters and we provide the corresponding conditions needed for them. 
Our second main contribution is a full characterization of the set of Bell coefficients for which the corresponding Bell operator is stoquastic under fixed measurement parameters $\vec{\theta}$.
We show that this set is both a polyhedron and a cone, and thus we name it the stoquasticity cone. 

Using the stoquasticity cone, we show that Bell operators limited to at most three-body operators can always be made stoquastic. 
Additionally, we numerically optimize the quantum-classical gap over all stoquastic Bell coefficients at fixed measurement parameters. 
We conclude from numerical optimizations that the Bell operator used for the largest many-body Bell violations~\cite{schmied_bell_2016,engelsen_bell_2017} is optimal with respect to stoquasticity at its optimal measurement parameters and proves to be a good candidate for other measurement parameters. 

Finally, we show that with any state with non-negative amplitudes in the Dicke basis, we can associate a stoquastic PI Bell operator whose ground state is that state. As the squared amplitudes of such a state define a probability distribution over the Dicke basis, stoquastic PI Bell operators can serve as parent Hamiltonians for arbitrary such probability distributions.
This potentially comes at the cost of requiring operators of order $K \sim \mathcal{O}(n)$.
However for values of $k \leq 3$, it is already possible to capture superpositions of Dicke states with a Gaussian profile with the variance scaling at $\sqrt{n}$, which correspond to spin squeezed states that are routinely prepared in experiments~\cite{marconi_symmetric_2026}.
This highlights the importance of higher-order operators in accessing the full range of ground state probability distributions.

The remainder of this paper is organized as follows. 
In Sec.~\ref{Sec:prelim}, we introduce the necessary preliminaries, including permutationally invariant Bell operators, stoquastic Hamiltonians and the relevant notions from polyhedral theory. 
In Sec.~\ref{sec:MotEx}, we motivate our study by considering the example of the Bell operator used in the largest experiments to date. 
In Sec.~\ref{sec:GenTwoBodyClass}, we analyze the stoquasticity of a class of two-body PI Bell operators and establish conditions under which they can be made stoquastic.
In Sec.~\ref{sec:StoqCone}, we introduce the stoquasticity cone and in Sec.~\ref{sec:2BodyRes} we provide its full characterization for the two-body case.
In Sec.~\ref{sec:3BodyRes}, we extend the analysis to three-body PI Bell operators and show that they can always be made stoquastic.
In Sec.~\ref{sec:HighOrderDistr}, we discuss the connection between stoquastic PI Bell operators and arbitrary probability distributions over the Dicke basis.
Finally, we conclude and provide an outlook in Sec.~\ref{sec:Conclusion}.

\section{preliminaries}\label{Sec:prelim}

As the study of many-body Bell inequalities is hindered by the exponential scaling of the probability parameter space with the number of qubits $n$, simplifying assumptions are often introduced to reduce the complexity.
In this work, we focus on permutationally invariant operators.
Additionally, we restrict ourselves to a scenario in which each party can choose one of two measurement inputs $x \in \{0,1\}$, and gets as an outcome one of the two possible outputs $a \in \{-1,+1\}$. With each input $x$ we associate a dichotomic measurement operator $\mathbf{M}_{x}$ with eigenvalues $\pm 1$. Importantly, due to Jordan's Lemma~\cite[Lemma~2]{toner_monogamy_2006}, we can restrict $\mathbf{M}_{x}$ to be a $2\times2$ real matrix, which allows us to parametrize them as 
\begin{equation}
\begin{aligned}
\mathbf{M}_0=\cos(\varphi) Z+ \sin(\varphi) X,\\ \mathbf{M}_1=\cos(\theta) Z+ \sin(\theta) X,
\label{eq:QubitMeas}
\end{aligned}
\end{equation}
where $Z$ and $X$ denote the Pauli matrices and $\varphi, \theta$ denoting the measurement parameters of the corresponding measurement settings.

Using these local measurements $M_{x}$ we define the $K$-body PI measurement operator as
\begin{equation}
     \mathbf{S}_{\vec{x}} := \sum_{\substack{i_1,\ldots,i_K=0\\ \text{all distinct}}}^{n-1} 
     \bigotimes_{j=1}^{K} \mathbf{M}_{x_j}^{i_j},
     \label{eq:kPIMeasOP}
\end{equation}
where $\vec{x} = (x_1,\ldots,x_K) \in \{0,1\}^K$ is the vector containing the measurement settings of all parties, $i_j \in \{0,\ldots,n-1\}$ denotes the qubit on which $\mathbf{M}_{x_j}^{i_j}$ acts, and the sum runs over all $K$-tuples of 
distinct qubits. 
We call $K$ the order of the PI measurement operator.
The explicit forms for first order and second order PI measurement operators are 
\begin{equation}
    \mathbf{S}_{x_{1}} := \sum_{i_1=0}^{n-1} \mathbf{M}_{x_1}^{i_1}, \nonumber
\quad
\mathbf{S}_{x_{1},x_{2}}:=\sum_{\substack{i_1,i_2=0\\i_1\neq i_2}}^{n-1} \mathbf{M}_{x_1}^{i_1} \otimes \mathbf{M}_{x_2}^{i_2},\nonumber
\end{equation}
respectively.

A PI Bell operator is defined as a linear combination of PI measurement operators
\begin{equation}
    \mathbf{B} \coloneqq \sum_{\vec{x} \in \mathcal{X}_K}\alpha_{\vec{x}}  \mathbf{S}_{\vec{x}},
    \nonumber
\end{equation}
where $\mathcal{X}_K := \bigcup_{k=1}^{K} \{0,1\}^k$,  $\alpha_{\vec{x}}  \in \mathbbm{R}$ is the corresponding Bell coefficient of the PI measurement operator $\mathbf{S}_{\vec{x}}$. 
We say that a PI Bell operator is of order $K$ if the largest PI measurement operator is of order $K$.
In the remainder of this work, we will often label the index of the Bell coefficients as $i \in (0,1,2,3,4,5,6,7,...)$  instead of the corresponding $\vec{x} \in (0,1,00,01,11,000,001,011,...)$, writing $\mathbf{B} = \sum_{i} \alpha_{i}  \mathbf{S}_{i}$ where $\mathbf{S}_{i} \equiv \mathbf{S}_{\vec{x}}$.
For example, the two-body PI Bell operator is given by
\begin{equation}
 \label{eq:TwoBodyBellOp}
 \mathbf{B} = \alpha_0  \mathbf{S}_0 + \alpha_1  \mathbf{S}_1 + \alpha_2 \mathbf{S}_{00} + \alpha_3  \mathbf{S}_{01} + \alpha_4 \mathbf{S}_{11}. 
\end{equation}

\subsection{Block decomposition of permutationally invariant Bell operators}\label{sec:PIBlock}

Due to the permutational symmetry, the Bell operators can be decomposed in a block-diagonal form, with each subblock $\mathbf{B}_J$ labelled by $J \in \{J_0, J_0 + 1, \ldots, n/2\}$, where $J_0 =0$ if $n$ is even and $J_0 =1/2$ if $n$ is odd~\cite{moroder_permutationally_2012}.
See Fig. \ref{fig:rhoPI_block} for a schematic representation. 
We call the $\mathbf{B}_J$ the $J$-th block of $\mathbf{B}$. 
The basis of the $J$-th block is given by
\begin{equation}
\label{eq:DickeBasis}
\{\ket{D^{k}_{2J}} \otimes \ket{\psi^-}^{\otimes m/2}\}_{k=0}^{2J},
\end{equation}
where $m = n-2J$, $\ket{\psi^-} \coloneqq \frac{\ket{01}-\ket{10}}{\sqrt{2}}$ and 
\begin{equation}
\label{eq:defDicke}
\ket{D^k_{2J}}:={2J\choose k}^{-1/2}\sum_{\sigma\in\Pi_{2J}}V_{\sigma}\left(\ket{0}^{\otimes 2J-k}\ket{1}^{\otimes k}\right),
\nonumber
\end{equation}
and $\Pi_{2J}$ is the symmetric group on $2J$ elements and $V_{\sigma}$ is defined as
\begin{equation}
V_\sigma \left( \bigotimes_{i=0}^{2J-1} \ket{\psi_i} \right) = \bigotimes_{i=0}^{2J-1} \ket{\psi_{\sigma^{-1}(i)}}, \nonumber
\end{equation}
where $\sigma \in \Pi_{2J}$. 
The state $\ket{D^k_{2J}}$ is called the Dicke state of $2J$ qubits with $k$ excitations~\cite{dicke_coherence_1954,dur_three_2000}.
In Ref.~\cite{tura_nonlocality_2015}, it was shown that the $J$-th block $\mathbf{B}_J$ of the Bell operator \eqref{eq:TwoBodyBellOp} is of the form 
\begin{equation}
\begin{aligned}
\label{eq:penta-block}
  (\mathbf{B}_J)_{i,j} = d_i \delta_{i,j} &+ u_i \delta_{i,j-1}+u_j
\delta_{i-1,j} \\
&+ v_i \delta_{i,j-2} + v_j\delta_{i-2,j},
\end{aligned}
\end{equation}
where $(\mathbf{B}_J)_{i,j}$ represents the matrix element of $\mathbf{B}_J$ with $i,j$ running between $0$ and $2J$, the coefficients $d_i, u_i, v_i \in \mathbbm{R}$ and $\delta_{i,j}$ denotes the Kronecker delta. 
The fact that only the diagonal elements, off-diagonal elements and second off-diagonal elements are nonzero makes it particularly interesting for our stoquasticity analysis, as all other elements are zero and automatically satisfy the stoquasticity conditions. 

More generally, it was shown in Ref.~\cite{tura_nonlocality_2015} that $K$-th order PI measurement operators can only contribute to at most the $K$-th off-diagonal.
This is due to the fact that $S_{\vec{x}}$ of order $K$ can be decomposed into a sum of products of at most $K$ single-body PI Pauli operators defined as $S_X \coloneq \sum_{i=0}^{n-1} X^i$, $S_Y \coloneq \sum_{i=0}^{n-1} Y^i$ and $S_Z \coloneq \sum_{i=0}^{n-1} Z^i$.
The operator $S_X \ket{D_k}$ generates a superposition of $\ket{D_{k-1}}$ and $\ket{D_{k+1}}$ and similarly for $S_Y \ket{D_k}$.
The operator $S_Z$ has only diagonal terms in the Dicke basis.
Cf. App. \ref{app:HighToSingle} for the exact expressions of $S_X$, $S_Y$ and $S_Z$ in the Dicke basis and the decomposition for two-body and three-body PI measurement operators into them.
Therefore, products of $K$ single-body PI Pauli operators and consequently $K$-th order PI measurement operators can reach at most the $K$-th off diagonal. 

For our analysis, we mainly focus on the symmetric subspace, i.e. the block 
$\mathbf{B}_J$ with $J=n/2$.
This is physically motivated by the fact that for two-body PI Hamiltonians, 
a Holstein-Primakoff argument suggests that the ground state is expected to 
lie in the symmetric block~\cite{fadel_bell_2018}.
Additionally, focusing on smaller blocks $\mathbf{B}_J$ with $J < n/2$ does 
not qualitatively change the analysis.
The matrix elements of $\mathbf{B}_J$ 
take the same algebraic form as those of $\mathbf{B}_{n/2}$, with $n$ replaced 
by $2J$~\cite{tura_nonlocality_2015}, making smaller blocks equivalent to 
the symmetric block at a reduced system size.

\subsection{Stoquastic operators}

A Hermitian operator $\mathbf{H}$ is said to be stoquastic with respect to a basis 
$\mathcal{B} = \{\ket{i}\}$ if 
\begin{equation}
    \bra{i} \mathbf{H} \ket{j} \leq 0 \quad \forall \; i \neq j.
\end{equation}
Stoquasticity was first introduced in~\cite{bravyi_merlin-arthur_2006} 
in the context of $k$-local Hamiltonians, where $\mathbf{H} = \sum_x \mathbf{H}_x$ and each 
local term $\mathbf{H}_x$ acts on at most $k$ qubits.

An important consequence of stoquasticity is that the ground state $\ket{\psi_0}$ of a 
stoquastic Hamiltonian has non-negative amplitudes in the basis $\mathcal{B}$, i.e. 
$\braket{i|\psi_0} \geq 0$ for all $i$.
This follows from the Perron-Frobenius theorem applied to the matrix $c\mathbbm{I} - \mathbf{H}$ for a sufficiently large $c$, compensating for the potentially negative diagonal elements of $-\mathbf{H}$. 
The ground state of $\mathbf{H}$ coincides with the eigenvector corresponding to the largest eigenvalue of $c\mathbbm{I} - \mathbf{H}$, which by Perron-Frobenius has nonnegative entries.
As a consequence, the ground state can be interpreted as a probability distribution over 
the basis states $\mathcal{B}$.

In the context of Bell operators, ground 
states play a particularly important role as they correspond to the states achieving the maximal quantum violations. 
As Bell operators are Hermitian, they fall within the general framework of stoquastic 
operators, and the question of whether a Bell operator can be made stoquastic via a local 
unitary transformation is therefore well-posed.
Though any Hermitian operator is trivially stoquastic in its eigen basis, the complexity of finding a local stoquastic basis grows exponentially with the system size, 
making the characterization of stoquastic Bell operators a nontrivial problem~\cite{marvian_computational_2019, ioannou_termwise_2022}.

\subsection{Polyhedra and cones}\label{sec:Polyhedron}

In the Bell scenario, the set of local correlations forms a polytope~\cite{brunner_bell_2014}. Interestingly, a related structure called a polyhedron can be used for the characterization of stochasticity.

A polytope is by construction a bounded object. 
If the object has at least one direction in which it is unbounded, it is called a polyhedron. 
A polyhedron can be characterized by the set of bounding hyperplanes,
\begin{align}
P &= \bigcap_{j=1}^k \left\{ x \in \mathbb{R}^d \ \middle|\ a_j^\top x \le b_j \right\},
\nonumber
\end{align}
Let $H_j$ denote the half space associated with $a_j^\top x \le b_j$.
Not all hyperplanes are equally informative. 
The corresponding hyperplane 
$a_j^\top x = b_j$ is redundant if all points satisfying the remaining half-spaces 
$H_i$ automatically satisfy $H_j$ as well, meaning it does not contribute any 
additional constraint on the feasible region.
A hyperplane is called irredundant if it is not redundant, i.e. it actively 
constrains the feasible region.

Alternatively, a polyhedron can also be characterized by a set of vertices and vectors which represent the direction in which it is unbounded
\begin{equation}
\begin{aligned}
P = \Bigl\{
\sum_{i=1}^I \lambda_i v_i + \sum_{j=1}^J \mu_j r_j + 
\sum_{k=1}^K  \nu_k l_k \;\Bigm|\;
\\
\lambda_i \ge 0,\
\mu_j \ge 0,\
\nu_k \in \mathbbm{R},\
\sum_{i=1}^I \lambda_i = 1
\Bigr\}.
\label{eq:Vpolyhedron}
\end{aligned}
\end{equation}
Here $\{v_i\}$ are the set of vertices of the polyhedron, $\{r_j\}$ are the set of extremal rays and $\{l_k\}$ are the set of extremal lines. 
These vertices, extremal rays and extremal lines are also called the generators of the cone. 
The extremal rays and extremal lines characterize the directions in which the polyhedron is unbounded. 
It has been proven that the hyperplane and vertex descriptions are equivalent, and therefore either representation can always be converted into the other~\cite{ziegler_lectures_1995}. 
We will call the former the primal description and the latter the dual description.

A particularly important special case arises when $b_j = 0$ for all $j$, 
i.e. when the system of inequalities is homogeneous. 
The resulting object is called a cone $C$, and as we will show in Sec.~\ref{sec:StoqCone}, 
the set of Bell coefficients yielding stoquastic operators naturally takes this form.

For any cone $C$ defined by a homogeneous system of inequalities, its dual description simplifies considerably. 
In the primal description, homogeneity implies that if $x \in C$ then 
$\lambda x \in C$ for all $\lambda \geq 0$. 
This implies that any nonzero feasible point $x$ can be written as 
$x = \frac{1}{2} \cdot 0 + \frac{1}{2} \cdot 2x$, i.e. as a convex combination 
of two distinct feasible points, and is therefore not a vertex. 
The dual description of a cone $C$ thus contains no nonzero vertices, and reduces 
to a combination of extremal rays and lines. 
When $C$ is pointed, i.e. $C \cap -C = \{0\}$, there are additionally no lines, 
since a line $l$ would require both $l \in C$ and $-l \in C$, contradicting 
pointedness. 
The origin is the only vertex of a pointed cone, and its dual description 
consists of extremal rays only.

When given the primal description of a pointed cone, the extremal rays of the dual description can be found systematically. 
Let us characterize the primal description of a polyhedron by the matrix $A \in \mathbbm{R}^{m \times d}$ where the rows $a_j^T \in \mathbbm{R}^d$ correspond to the hyperplanes $a_j^T x = 0$. 
If the cone is pointed, then the rays can be found through intersecting $d-1$ linearly independent hyperplanes and checking if all points of said intersection satisfy all $m$ hyperplane conditions. 
This is because the intersection of $d-1$ linearly independent hyperplanes gives rise to a one-dimensional object, which may be represented by a vector. 
Any positive scalar multiplication of the vector will still lie inside the cone if the original vector does. 
Conversely, all extremal rays have to lie on the intersection of the $d-1$ independent hyperplanes.
If this were not the case, then the ray could be written as a combination of points on the hyperplanes, making it therefore not extremal. 

\section{A motivating Example}\label{sec:MotEx}

For the detection of the largest many-body Bell correlations to date~\cite{schmied_bell_2016,engelsen_bell_2017}, the following Bell inequality was used
\begin{equation}
 \label{eq:Ineq6}
 \begin{aligned}
 \bigg[- 2 {\cal S}_0(\varphi,\theta) &+ \frac{1}{2}{\cal S}_{00}(\varphi,\theta) - {\cal S}_{01}(\varphi,\theta) \\ 
 &+ \frac{1}{2}{\cal S}_{11}(\varphi,\theta) + 2N \bigg] \geq 0,
  \end{aligned}
\end{equation}
which was introduced in Ref.~\cite{tura_nonlocality_2015}. 
Here, $\mathcal{S}_{\vec{x}} \coloneqq  \langle \mathbf{S}_{\vec{x}}\rangle$ denotes the expected value of the PI measurement operator, which is built from the local measurement operators of Eq. \eqref{eq:QubitMeas} with parameters $\varphi,\theta$. 
The maximum violation of this inequality is achieved at $(\varphi,\theta)=(\pi/6,5\pi/6)$, see Eq. \eqref{eq:GaussianDickeCoeff} for the corresponding state.
Interestingly, at these measurement parameters the corresponding Bell operator
\begin{equation}
 \label{eq:Ineq6Op}
 \begin{aligned}
 \mathbf{B} = - 2 \mathbf{S}_0 + \frac{1}{2}\mathbf{S}_{00} - \mathbf{S}_{01} + \frac{1}{2}\mathbf{S}_{11},
 \end{aligned}
\end{equation}
is stoquastic in the Dicke basis, i.e. all off-diagonal elements of the operator are nonpositive~\cite{tura_nonlocality_2015}.

As stoquasticity is a basis dependent property, we remark that stoquasticity is commonly defined with respect to the computational basis. 
The following theorem shows that stoquasticity in the computational basis and stoquasticity in the Dicke Basis are essentially equivalent however. 
Stoquasticity of $\mathbf{B}_J$ in the Dicke basis is sufficient to guarantee 
stoquasticity in the computational basis, up to the addition of a projector 
onto the symmetric subspace.
\begin{thm}\label{thm:DickeToComp}
    Let $\mathbf{B}_J$ be a Bell operator supported on the symmetric subspace that 
    is stoquastic in the Dicke basis, i.e. $(\mathbf{B}_J)_{k,k'} \leq 0$ for all 
    $k \neq k'$. Then there exists a constant $c < 0$ such that 
    $\mathbf{B}_J + c\, \Pi_s$ is stoquastic in the computational basis, where 
    $\Pi_s$ denotes the projector onto the symmetric subspace.
\end{thm}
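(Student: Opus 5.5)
The plan is to compute the matrix elements of $\mathbf{B}_J + c\,\Pi_s$ in the computational basis directly from its expansion in the Dicke basis. The key fact is that Dicke states have nonnegative amplitudes in the computational basis. Write $n=2J$ for the symmetric block and expand
\begin{equation}
\mathbf{B}_J = \sum_{k,k'=0}^{n} b_{k,k'} \ket{D^k_n}\bra{D^{k'}_n}, \qquad \Pi_s = \sum_{k=0}^{n} \ket{D^k_n}\bra{D^k_n},
\nonumber
\end{equation}
where $b_{k,k'} = (\mathbf{B}_J)_{k,k'}$ is real and symmetric. From the definition of $\ket{D^k_n}$ we have $\braket{x|D^k_n} = \binom{n}{k}^{-1/2}\delta_{|x|,k}$ for every bit string $x \in \{0,1\}^n$, where $|x|$ denotes the Hamming weight. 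This value is nonnegative.

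Substituting gives, for any bit strings $x,y$,
\begin{equation}
\bra{x}\left(\mathbf{B}_J + c\,\Pi_s\right)\ket{y} = \frac{b_{|x|,|y|} + c\,\delta_{|x|,|y|}}{\sqrt{\binom{n}{|x|}\binom{n}{|y|}}}.
\nonumber
\end{equation}
For $x\neq y$ we split into two cases.

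\emph{Case $|x|\neq|y|$.} The projector term vanishes. The remaining entry is $b_{|x|,|y|}$ times a positive factor, and $b_{|x|,|y|}\le 0$ by stoquasticity in the Dicke basis. So the entry is nonpositive for any $c$.

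\emph{Case $|x|=|y|=k$ with $x\neq y$.} The entry is $(b_{k,k}+c)/\binom{n}{k}$. The diagonal Dicke elements $b_{k,k}$ are unconstrained in sign, so this is the only place where something can fail. This is precisely why the projector is needed.

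We therefore choose
\begin{equation}
c < \min\Bigl(0,\; -\max_{k} b_{k,k}\Bigr),
\nonumber
\end{equation}
which makes every entry in the second case nonpositive. This finite $c<0$ then renders $\mathbf{B}_J + c\,\Pi_s$ stoquastic in the computational basis. Diagonal computational-basis entries impose no condition.

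There is no real obstacle here; the only subtle point is recognizing that off-diagonal computational-basis entries inherit the diagonal Dicke elements within each weight sector, and that adding $c\,\Pi_s$ shifts exactly those entries. If $\mathbf{B}_J$ is meant to act on the full space, supported only on the symmetric block, the same computation applies verbatim, since all non-symmetric components are zero by assumption.
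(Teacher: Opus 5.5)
Your argument is correct and is essentially the paper's proof. Dicke off-diagonal entries connect different Hamming-weight sectors, so they only produce nonpositive computational-basis off-diagonals; this relies on Dicke amplitudes being nonnegative, which the paper leaves implicit. The diagonal Dicke entries $b_{k,k}$ populate the intra-sector off-diagonals, and the shift $c\,\Pi_s$ pushes these down. Your version makes this explicit, with the closed-form entry $(b_{|x|,|y|}+c\,\delta_{|x|,|y|})/\sqrt{\binom{n}{|x|}\binom{n}{|y|}}$ and the concrete threshold $c<\min(0,-\max_k b_{k,k})$.
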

\begin{proof}  
For the study of stoquasticity, we are only concerned with the off-diagonal elements in the computational basis.
Let us first consider the off-diagonal elements of $\mathbf{B}_J$ in the Dicke basis.
Since $\ket{D^k_{2J}}$ and $\ket{D^{k'}_{2J}}$ have different Hamming weights when $k \neq k'$, they have no computational basis states in common.
Consequently $\ket{D^{k'}_{2J}}\bra{D^{k}_{2J}}$ for $k \neq k'$ contribute only to off-diagonal elements in the computational basis, and nonpositive off-diagonal elements of $\mathbf{B}_J$ in the Dicke basis contribute only nonpositive off-diagonal elements in the computational basis.

The diagonal elements $(\mathbf{B}_J)_{k,k}$ may be positive, and the corresponding terms 
$(\mathbf{B}_J)_{k,k}\ket{D^k_{2J}}\bra{D^k_{2J}}$ contribute to both diagonal and 
off-diagonal elements in the computational basis, potentially resulting in positive 
off-diagonal elements.
Adding $c\,\Pi_s$ with $c < 0$ shifts all Dicke basis diagonal elements by $c$.
As $\Pi_s$ is diagonal in the Dicke basis, it does not affect the off-diagonal elements 
of $\mathbf{B}_J$ in the Dicke basis, leaving the first part of the argument intact.
For $c$ sufficiently negative, all diagonal contributions become nonpositive in the 
computational basis as well, completing the proof.
\end{proof}

In the Dicke basis, the corresponding state with the lowest eigenvalue is guaranteed to only have real and positive elements. 
The state corresponding to the maximal violation of Ineq. \eqref{eq:Ineq6} is even a Gaussian superposition of Dicke states 
$\ket{\psi_n}=\sum_{k=0}^n \psi_k^{(n)} \ket{D^k_n}$ where
\begin{equation}
\label{eq:GaussianDickeCoeff}
\psi_k^{(n)}:=\frac{e^{-(k-\mu)^2/4\sigma}}{\sqrt[4]{2\pi \sigma}},
\end{equation}
with $\mu :=n/2+A/(2B-C)$ and $e^{-1}/2\pi\ll \sigma \ll n$. 
In the next section, we investigate to which extent the results of this section are generalizable. 

\section{Stoquasticity in a generalized class of 2-body permutationally invariant Bell operators}\label{sec:GenTwoBodyClass}

The Bell operator of Eq. \eqref{eq:Ineq6Op} belongs to a class of permutationally invariant Bell operators
\begin{equation}
 \alpha \mathbf{S}_0 + \beta \mathbf{S}_1 + \frac{\gamma}{2}\mathbf{S}_{00} + \delta\mathbf{S}_{01}+ \frac{\varepsilon}{2}\mathbf{S}_{11},
 \label{eq:2BodyBellOp}
\end{equation}
with parameters $\alpha, \beta, \gamma, \delta, \varepsilon$ satisfying
\begin{equation}
\begin{aligned}
\alpha &= x[\sigma \mu + \tau (x+y)], \quad
\beta = \mu y, \\
\gamma &= x^2, \quad
\delta = \sigma x y, \quad
\varepsilon = y^2 ,
\end{aligned}
\label{eq:TwoBodyClassParam}
\end{equation}
where $\sigma,\tau \in \{-1,1\}$, and $x,y, \mu \in \mathbbm{N}$. 
Additionally, depending on whether the number of parties $n$ is even or odd, the parity of $\mu$ must be opposite to that of $x$ or $y$ respectively.
The operator of Eq. \eqref{eq:Ineq6Op} can be retrieved by choosing the parameters of the Bell operator to be $(\alpha, \beta, \gamma, \delta, \epsilon) = (-2,0,1,-1,1)$.
This occurs when the corresponding set of elementary parameters $(x,y,\sigma, \tau, \mu) =(1,1,-1,-1,0)$.

The class of Bell inequalities characterized by Eq. \eqref{eq:2BodyBellOp} and Eq. \eqref{eq:TwoBodyClassParam} is tangent to at least one vertex of the symmetrized 2-body local polytope~\cite{tura_nonlocality_2015}.
This is the 2-body local polytope projected into the subspace consisting of the probabilities satisfying $ P(\sigma(\vec{\alpha})|\sigma(\vec{x})) = P(\vec{\alpha}|\vec{x})$ with $\sigma \in \Pi_n$, where $\Pi_n$ is the symmetric group of $n$ elements.
The fact that the inequalities are tangent makes them particularly interesting for the detection of nonlocal correlations. 
To demonstrate that stoquasticity is not restricted to Eq. \eqref{eq:Ineq6Op}, but is instead a prevalent feature of this class, we prove the following theorem.

\begin{thm} \label{thm:TwoBodyGeneralClass}
For the class of Bell operators characterized by Eq. \eqref{eq:2BodyBellOp} and Eq. \eqref{eq:TwoBodyClassParam}, the Bell operator is stoquastic in every block under the conditions
\begin{align}
x \sin \varphi = -\sigma y \sin \theta,\label{eq:C_condition}
\end{align}
\begin{equation}
\begin{aligned}
 x \tau (x+y) \sin \varphi\leq 0.\label{eq:A_condition}
\end{aligned}
\end{equation}
\end{thm}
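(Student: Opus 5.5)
The plan is to rewrite the Bell operator of Eq.~\eqref{eq:2BodyBellOp} entirely in terms of the collective single-body operators $S_X$ and $S_Z$ (as in App.~\ref{app:HighToSingle}), and then read off the off-diagonal Dicke-basis elements. Write $c_\varphi=\cos\varphi$, $s_\varphi=\sin\varphi$, and similarly for $\theta$.

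\textbf{Single-body terms.} Since the local measurements are $\mathbf{M}_0=c_\varphi Z+s_\varphi X$ and $\mathbf{M}_1=c_\theta Z+s_\theta X$, we have
\begin{equation}
\mathbf{S}_0=c_\varphi S_Z+s_\varphi S_X,\qquad \mathbf{S}_1=c_\theta S_Z+s_\theta S_X. \nonumber
\end{equation}

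\textbf{Two-body terms.} For these I use $\mathbf{S}_a\mathbf{S}_b=\mathbf{S}_{ab}+\sum_i \mathbf{M}_a^i\mathbf{M}_b^i$. The terms with $a=b$ give $\mathbf{S}_{aa}=\mathbf{S}_a^2-n\mathbbm{1}$. Symmetrizing the mixed term gives $\mathbf{S}_0\mathbf{S}_1+\mathbf{S}_1\mathbf{S}_0=2\mathbf{S}_{01}+2n\cos(\varphi-\theta)\mathbbm{1}$, because the antisymmetric part $\propto Y$ cancels. With $\gamma=x^2$, $\delta=\sigma xy$ and $\varepsilon=y^2$, the quadratic part of the operator becomes
\begin{equation}
\frac{\gamma}{2}\mathbf{S}_{00}+\delta\mathbf{S}_{01}+\frac{\varepsilon}{2}\mathbf{S}_{11}=\frac12\mathbf{L}^2-\mathrm{const}\cdot\mathbbm{1},\qquad \mathbf{L}:=x\mathbf{S}_0+\sigma y\mathbf{S}_1. \nonumber
\end{equation}
Expanding $\mathbf{L}$ gives
\begin{equation}
\mathbf{L}=(xc_\varphi+\sigma y c_\theta)S_Z+(xs_\varphi+\sigma y s_\theta)S_X. \nonumber
\end{equation}
Condition \eqref{eq:C_condition} is exactly the vanishing of the $S_X$ coefficient. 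Hence $\mathbf{L}\propto S_Z$ is diagonal in the Dicke basis, and so is $\mathbf{L}^2$. This perfect-square structure is the key step; the only care needed is bookkeeping the identity terms, which are irrelevant to stoquasticity anyway.

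\textbf{Remaining off-diagonal part.} What is left comes from the linear part $\alpha\mathbf{S}_0+\beta\mathbf{S}_1$. Its $S_X$ coefficient, using \eqref{eq:TwoBodyClassParam}, is
\begin{equation}
\alpha s_\varphi+\beta s_\theta=\sigma\mu\,(xs_\varphi+\sigma y s_\theta)+\tau x(x+y)s_\varphi=\tau x(x+y)\sin\varphi, \nonumber
\end{equation}
again by \eqref{eq:C_condition}. The full operator is therefore (diagonal) $+\,\tau x(x+y)\sin\varphi\, S_X$.

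\textbf{Block structure and sign of $S_X$.} Finally, I use the block decomposition of Sec.~\ref{sec:PIBlock}. In each block $J$, with basis $\ket{D^k_{2J}}\otimes\ket{\psi^-}^{\otimes m/2}$, the singlets are annihilated by the collective operators. Consequently $S_X$ acts as the spin-$J$ operator $2J_x$, whose only nonzero matrix elements are $\sqrt{(k+1)(2J-k)}\geq 0$ on the first off-diagonal (App.~\ref{app:HighToSingle}). Thus every off-diagonal element in every block equals $\tau x(x+y)\sin\varphi$ times a nonnegative number. Condition \eqref{eq:A_condition} then gives nonpositivity, i.e.\ stoquasticity in every block.

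The main point to verify carefully is that the phase convention of the Dicke basis in the non-symmetric blocks indeed gives nonnegative $S_X$ entries. I would check this by noting that the collective operators act only on the symmetric factor, since they commute with and annihilate the singlet part.
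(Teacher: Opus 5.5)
Your proof is correct, and it takes a more structural route than the paper. The paper works at the level of matrix elements. It imports the first- and second-off-diagonal entries $u_k$, $v_k$ of each block, written in terms of $A'$, $C$, $D$, and substitutes the parametrization \eqref{eq:TwoBodyClassParam}. It then observes that $C=(x\sin\varphi+\sigma y\sin\theta)^2$, so that $C\le 0$ forces \eqref{eq:C_condition}, checks by direct computation that $D$ then vanishes, and finally reduces $A'\le 0$ to \eqref{eq:A_condition}.

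Your identity $\frac{\gamma}{2}\mathbf{S}_{00}+\delta\mathbf{S}_{01}+\frac{\varepsilon}{2}\mathbf{S}_{11}=\frac12\mathbf{L}^2-\mathrm{const}\cdot\mathbbm{1}$ explains both of those computations at once. Writing $\mathbf{L}=aS_Z+bS_X$, one has $C=b^2$ and $D=ab$. So the perfect square and the vanishing of $D$ are simply the statement that $\mathbf{L}$ becomes diagonal. Meanwhile $A'$ is the $S_X$-coefficient of the linear part, which you correctly reduce to $\tau x(x+y)\sin\varphi$.

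What your route buys:
\begin{itemize}
\item It needs only the matrix elements of $S_X$ and $S_Z$, not the $u_k,v_k$ formulas of Ref.~\cite{tura_nonlocality_2015}.
\item The point you flag at the end is fine. The singlets are annihilated by the collective operators, and the Dicke states have nonnegative amplitudes, so $S_X$ has entries $\Gamma(k)\ge 0$ in every block.
\item It gives a sharper conclusion. Under \eqref{eq:C_condition}, the full operator is a polynomial in $S_Z$ plus $\tau x(x+y)\sin\varphi\,S_X$. Since $S_X=\sum_i X^i$ has nonnegative entries in the computational basis, this yields computational-basis stoquasticity on the whole space, without the projector shift of Theorem~\ref{thm:DickeToComp}.
\end{itemize}

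What the paper's route buys: the matrix-element formulation is the one that carries over to arbitrary coefficients, and it feeds directly into the stoquasticity-cone analysis. The paper also frames \eqref{eq:C_condition} as forced, i.e.\ necessary. Your decomposition gives this just as easily. In blocks with $2J\ge 2$, the second off-diagonal of the operator is exactly $\frac12 b^2\Gamma(k)\Gamma(k+1)$, which is strictly positive when $b\neq 0$.
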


\begin{proof}

As the two-body symmetric Bell operator admits a block decomposition, showing that the full operator is stoquastic is equivalent to showing that each block $\mathbf{B}_J(\varphi, \theta)$ is stoquastic. 
The explicit expressions of elements on the first and second off-diagonal, $u_k$ and $v_k$ respectively, are 
\begin{equation}
\begin{aligned}
 u_k := [A' + (2J-1-2k)D]\sqrt{(2J-k)(k+1)}, \nonumber\\
 v_k := C\sqrt{(2J-k)(2J-k-1)(k+1)(k+2)}/2, \nonumber
\end{aligned}
\end{equation}
with
\begin{equation}
 A':=\alpha \sin \varphi + \beta \sin \theta, \nonumber
\end{equation}
\begin{equation}
 C:=\gamma \sin^2 \varphi + 2\delta \sin \varphi \sin \theta + \varepsilon\sin^2\theta, \nonumber
\end{equation}
\begin{equation}
\begin{aligned}
 D:=&\gamma \cos \varphi \sin \varphi + \delta \cos \varphi \sin \theta \\
 + &\delta
\cos \theta \sin \varphi + \varepsilon \cos \theta \sin \theta. \nonumber
\end{aligned}
\end{equation}
The index $k$ runs from $0$ to $2J$, so the square root terms are strictly nonnegative.
The constraints which need to be fulfilled such that each block $\mathbf{B}_J(\varphi, \theta)$ is stoquastic are thus the following:
\begin{align}
     C \leq 0, \nonumber\\
     A' + (2J-1-2k)D \leq 0. \nonumber
\end{align}
First, due to Eq. (\ref{eq:TwoBodyClassParam}), we have the following conditions: $\alpha, \delta \in \mathbb{Z}$ and $\beta, \gamma, \epsilon \in \mathbb{Z}_{\geq0}$.
We note that $\alpha$ and $\delta$ can be independently positive and negative. 
With this in mind, let us first consider the condition $C \leq 0$.
After substituting Eq. (\ref{eq:TwoBodyClassParam}) into it, we obtain the square $(x \sin \varphi + \sigma y \sin \theta)^2$. 
$C$ can thus never be negative. 
For it to be also nonpositive and therefore zero, the following condition must hold
\begin{equation}
    x \sin \varphi = -\sigma y \sin \theta,
    \nonumber
\end{equation}
which is Eq. \eqref{eq:C_condition}, the first condition of the theorem. 
Inserting the above condition into the expression for $D$, we obtain 
\begin{equation}
\begin{aligned}
    D = \bigg[ -\sigma x y \cos \varphi \sin \theta - \sigma x y \cos \theta \sin \varphi \\ 
    + \sigma x y (\cos \varphi \sin \theta + \cos \theta \sin \varphi) \big] \nonumber\\
    = xy \left(\cos \varphi \sin \theta + \cos \theta \sin \varphi \right) ( -\sigma + \sigma) = 0. \nonumber
\end{aligned}
\end{equation}
With this substitution, the condition $ A' + (2J-1-2k)D \leq 0$ thus reduces to $ A'\leq 0$.
Inserting Eq. \eqref{eq:TwoBodyClassParam} in $A' \leq 0$, we obtain $x(\sigma \mu + \tau (x+y)) \sin \varphi + \mu y \sin \theta \leq 0$. After also substituting the condition of Eq. \eqref{eq:C_condition} into it, we obtain 
\begin{equation}
\begin{aligned}
x(\sigma \mu + \tau (x+y)) \sin \varphi - \sigma \mu x \sin \varphi \\
= x \tau (x+y) \sin \varphi\leq 0,
\nonumber
\end{aligned}
\end{equation}
which is Eq. \eqref{eq:A_condition}, the second condition of the theorem. 
Thus if  Eq. \eqref{eq:C_condition} and Ineq. \eqref{eq:A_condition} simultaneously hold, the operator defined by Eq. \eqref{eq:2BodyBellOp} and Eq. \eqref{eq:TwoBodyClassParam} is stoquastic across all blocks. 

\end{proof}

Theorem \ref{thm:TwoBodyGeneralClass} shows that for all parameters defined by Eq. \eqref{eq:TwoBodyClassParam} characterizing the Bell operators of Eq. \eqref{eq:2BodyBellOp}, there exist measurement angles for which the Bell operator is stoquastic.
The conditions on the measurement parameters are rather stringent however. 
To see this, let us consider the following concrete example of Eq. \eqref{eq:Ineq6Op} which has parameters $(x,y,\sigma, \tau, \mu) =(1,1,-1,-1,0)$.
In this case, Eq. \eqref{eq:A_condition}
 and Eq. \eqref{eq:C_condition} become
\begin{equation}
     \sin \varphi =  \sin \theta, \; \; \;   - \sin \varphi \leq 0.
     \label{eq:Ineq6StoqCond}
\end{equation}
For $\varphi,\theta \in [0,2\pi)$,  Eq. \eqref{eq:Ineq6StoqCond} translates to $\varphi,\theta \in [0,\pi]$ and $(\varphi = \theta) \lor (\varphi = \pi-\theta)$.
Eq. \eqref{eq:Ineq6StoqCond} therefore imposes a stringent constraint on the relationship between the two measurement parameters $\varphi, \theta$. 

However, when optimizing a Bell operator over $\vec{\alpha}$ and $(\varphi,\theta)$ for the quantum violation, these stoquasticity requirements are generally not met. 
As the spectra of Bell operators are invariant under local unitary transformations $\mathbf{U}^{\otimes n}$, all operators of the form $\mathbf{U}^{\otimes n} \mathbf{B} (\mathbf{U}^{\dagger})^{\otimes n}$ yield the same maximum quantum violation. 
In particular, for the class of Bell operators defined by Eq.~\eqref{eq:2BodyBellOp} and Eq.~\eqref{eq:TwoBodyClassParam}, it was shown that under local orthogonal transformations of the form~\cite{tura_nonlocality_2015}
 \begin{equation}
 \label{eq:unitary}
  \mathbf{U}(\Delta):=\left(
  \begin{array}{rr}
  \cos(\Delta/2)&-\sin(\Delta/2)\\
  \sin(\Delta/2)& \cos(\Delta/2)
  \end{array}
  \right),
  \nonumber
 \end{equation}
the following holds $\mathbf{B}(\varphi+\Delta,\theta+\Delta) = \mathbf{U}(\Delta)^{\otimes n} \mathbf{B}(\varphi,\theta) (\mathbf{U}(\Delta)^{\dagger})^{\otimes n}$. 
This means that the spectrum only depends on the difference between the measurement parameters $\varphi$ and $\theta$.
Additionally, for a fixed $\Delta$ and otherwise arbitrary values of $\varphi$ and $\theta$, the operator is generally not stoquastic and the corresponding ground state is generally not a positive linear combination of Dicke states~\cite{tura_nonlocality_2015}. 

Though for the class of Bell operators defined by Eq.~\eqref{eq:2BodyBellOp} and Eq.~\eqref{eq:TwoBodyClassParam}, stoquasticity reduces to a single-parameter problem due to the shared measurement settings, finding local unitaries that render a general Bell operator stoquastic is NP-complete~\cite{marvian_computational_2019}.
Rather than approaching stoquasticity through constraints on the measurement parameters, we develop a framework for characterizing the full region of stoquastic Bell coefficients at fixed measurement parameters.

\section{Stoquasticity Cone}\label{sec:StoqCone}

A full characterization of the stoquastic parameter regimes for permutationally invariant Bell operators follows from expressing the off-diagonal constraints as linear inequalities on the Bell coefficients. 
Any permutationally invariant Bell operator $\mathbf{B}$ can be written as a linear combination of PI measurement operators $\mathbf{B}=\sum_i \alpha_i \mathbf{S}_{i}$, where the label $i$ encodes both the order of the symmetrized operator and its measurement settings, cf. Sec. \ref{Sec:prelim}. 
The matrix elements of $\mathbf{B}$ implicitly depend on $\varphi,\theta$ through trigonometric relations, since elements of the symmetric measurement operators $\mathbf{S}_i$ depend trigonometrically on $\varphi,\theta$, cf. App. \ref{app:HighToSingle}. 
Finding closed-form conditions on $\alpha_i$, $\varphi$, and $\theta$ that make the Bell operator stoquastic is therefore nontrivial.

However, for fixed measurement parameters $\varphi,\theta$, the stoquasticity conditions on the off-diagonals of the Bell operator, $\mathbf{B}_{k,l} \leq 0 \; \forall k \neq l$, give rise to a set of linear inequalities on the Bell coefficients of the form $\sum_{i}  (\mathbf{S}_{i})_{k,l} \alpha_i \leq 0 \; \forall k \neq l$.
Here $(\mathbf{S}_{i})_{k,l} \in \mathbbm{R}$ is the $k,l$-th matrix element of the $i$-th symmetrized measurement operator. 
We note that only matrix elements up to the $K$-th off-diagonal, where $K$ is the order of the Bell operator, give rise to nontrivial constraints as the matrix elements on higher off-diagonals are $0$, cf. Sec. \ref{sec:PIBlock}.
The constraints on the $\alpha_i$ define hyperplanes.
Moreover, since the system of inequalities is homogeneous, the feasible region of the $\alpha_i$'s forms a cone, which is a special case of a polyhedron.
We denote this cone as follows
\begin{equation}
\mathbbm{S}^{n,K}_{\varphi,\theta} = \{\vec{\alpha} \in \mathbbm{R}^m | (\vec{\mathbf{S}}_{k,l})^T \cdot \vec{\alpha} \leq 0 \; \forall k\neq l\},
\label{eq:PolPrimal}
\end{equation}
where 
\begin{equation}
\begin{aligned}
(\vec{\mathbf{S}}_{k,l})^T = \big( &(\mathbf{S}_0)_{k,l}, (\mathbf{S}_1)_{k,l}, (\mathbf{S}_{00})_{k,l}, 
(\mathbf{S}_{01})_{k,l}, (\mathbf{S}_{11})_{k,l},\\
&...,(\mathbf{S}_{0\mathbf{1}_{K-1}})_{k,l}, (\mathbf{S}_{\mathbf{1}_{K}})_{k,l}
\big) \nonumber
\end{aligned}
\end{equation}
where ${\mathbf{1}_{K}}$ denotes the string of $K$ ones and $\vec{\alpha}=(\alpha_0,\ldots,\alpha_i,\ldots,\alpha_m)$ with $m$ being the number of measurement settings.
The variables $n, K$ denote the number of parties and the maximum on the order of the PI measurement operators respectively. 

Though in the previous sections we focused on the two-input-two-output scenario, the construction for the cone extends to Bell operators with an arbitrary number of inputs and outputs.
As the cone $\mathbbm{S}^{n,K}_{\varphi,\theta}$ is by definition the set of $\vec{\alpha}$ that yield stoquastic Bell operators for a set of fixed measurement parameters, solving $\mathbbm{S}^{n,K}_{\varphi,\theta}$ therefore also provides a complete characterization of all Bell coefficients that render the corresponding operator stoquastic.

To achieve such a characterization, we turn to the dual description of $\mathbbm{S}^{n,K}_{\varphi,\theta}$.
Rather than characterizing the permissible $\vec{\alpha}$ indirectly through the hyperplane conditions, the dual description directly expresses any feasible Bell coefficient as a (positive) linear combination of the dual generators of the cone.
We first note that as Eq. \eqref{eq:PolPrimal} is homogeneous, the origin is always a feasible point, i.e. $\vec{0} \in \mathbbm{S}^{n,K}_{\varphi,\theta}$.
Additionally, $\mathbbm{S}^{n,K}_{\varphi,\theta}$ is generally unbounded and it may contain rays that extend in both directions, which are simply denoted lines.
Therefore, $\mathbbm{S}^{n,K}_{\varphi,\theta}$ is not pointed. 
The dual description
\begin{equation}
\begin{aligned}
\mathbbm{S}^{n,K}_{\varphi,\theta} = \bigg\{\vec{\alpha} \in \mathbbm{R}^m |
\vec{\alpha} = \sum_i &\lambda_i \vec{r}_i + \sum_j \mu_j \vec{l}_j,\\
\; &\lambda_i \geq 0, \; \mu_j \in \mathbbm{R} \bigg\}, 
\label{eq:PolDual}
\end{aligned}
\end{equation}
can thus be fully characterized by a positive combination of the extreme rays, $\vec{r}_i$, and a linear combination of the lines, $\vec{l}_j$, i.e. no vertices are needed. 

Lines are particularly interesting to consider: they provide extra degrees of freedom that do not alter the off-diagonal elements and thus the stoquasticity of the resulting Bell operator.
However, note that different linear combinations do change the Bell inequality, thus yielding richer expressivity. 
This is due to the fact that for a line $\vec{l}_j$
\begin{equation}
\sum_i (\mathbf{S}_i)_{k,l} (\vec{l}_j)_i \le 0 ,\;\forall, \; k\neq l \nonumber
\end{equation}
and
\begin{equation}
\sum_i -(\mathbf{S}_i)_{k,l} (\vec{l}_j)_i \le 0 ,\; \forall, \; k\neq l \nonumber
\end{equation}
hold simultaneously. 
Here $(\vec{l}_j)_i$ denotes the $i-$th element of $\vec{l}_j$.
However, the above is only possible if $ \sum_{i} (\mathbf{S}_{i})_{k,l} (\vec{l}_j)_i = 0 \; \forall \; k \neq l$.
Any linear combination of lines thus does not change the off-diagonal elements of the Bell operator, i.e. for
\begin{equation}
    \vec{\alpha}_1 = \sum_i \lambda_i \vec{r}_i + \sum_j \mu^1_j \vec{l}_j, \nonumber
\end{equation}
and
\begin{equation}
    \quad \vec{\alpha}_2 = \sum_i \lambda_i \vec{r}_i + \sum_j \mu^2_j \vec{l}_j, \nonumber
\end{equation}
$\mathbf{B}_{k,l}(\vec{\alpha}_1) = \mathbf{B}_{k,l}(\vec{\alpha}_2) \; \forall \; k \neq l$ and $\mu^i_j \in \mathbbm{R}$, where $\mathbf{B}_{k,l}(\vec{\alpha}_x)$ denotes the $(k,l)$-th element of the Bell operator with Bell coefficients $\vec{\alpha}_x$.
For a fixed set of $\lambda_i$'s, there is thus an affine subspace over the parameters $\mu_j$ which leaves all the off-diagonals invariant.

Quantities such as the classical bound $\beta_C$ and the quantum bound $\beta_Q$, however are generally different for different values of $\vec{\alpha}_x$.
Here $\beta_Q$ is the lowest expected value of a Bell operator optimized over all quantum states and $\beta_C$ is the lowest expected value of the same Bell operator optimized over all local deterministic strategies.
A key figure of merit for Bell operators is the quantum-classical gap  $\beta_Q/\beta_C$. 
Since changing the coefficients $\mu_i$ of the lines modifies $\vec{\alpha}$ while leaving the off-diagonal invariant,
this freedom can be used to optimize the Bell coefficients to maximize the gap between the quantum and classical bounds, without affecting the stoquasticity. 

\subsection{Two-body scenario}\label{sec:2BodyRes}

Two-body permutationally invariant Bell operators provide the simplest setting in which the stoquasticity constraints can be analyzed, while still allowing for a separation between the best quantum value $\beta_Q$ and the best classical value $\beta_C$.
Moreover, although the number of hyperplanes grows with $n$, the number of irredundant hyperplanes is constant.
Additionally, the number of extremal lines and rays that characterize the dual description is independent of $n$ as well, yielding a compact parametrization of all admissible Bell coefficients. Such a parametrization enables efficient numerical optimization of the quantum–classical gap within the class of stoquastic operators.
As noted in Section \ref{sec:PIBlock}, we focus on the symmetric subspace, which corresponds to focusing on the block $\mathbf{B}_{n/2}$.

For the characterization of the hyperplane description of $\mathbbm{S}^{n,2}_{\varphi,\theta}$ in terms of only the irredundant hyperplanes, we calculate the analytical expression for each off-diagonal element of the two-body Bell operator, cf. App. \ref{app:HighToSingle} and \ref{app:TwoBodyHPlanes}.
Then using the fact that all other hyperplanes can be written as conical combinations of three hyperplanes, we show that there are only three irredundant ones. 
In particular, the form of the irredundant hyperplanes is as follows
\begin{align}
\label{eq:TwoBodyHyperplanes}
         \alpha_0 s^1_0 + \alpha_1 s^1_1 + \alpha_2 s^1_2+ \alpha_3 s^1_3+ \alpha_4 s^1_4 &\leq 0, \nonumber\\
     \alpha_0 s^1_0 + \alpha_1 s^1_1 - \alpha_2 s^1_2 - \alpha_3 s^1_3 - \alpha_4 s^1_4 &\leq 0, \; \\
      \alpha_2 s^2_2 + \alpha_3 s^2_3 + \alpha_4 s^2_4 &\leq 0, \; \nonumber
\end{align}
where the $s^i_j \in \mathbb{R}$.
See App. \ref{app:TwoBodyHPlanes} for the exact form in terms of $\varphi,\theta$.

In the first two conditions of Eq. \eqref{eq:TwoBodyHyperplanes}, the coefficients for the first two variables $\alpha_0, \alpha_1$ are the same. 
For the remaining three variables $\alpha_2, \alpha_3, \alpha_4$, they differ by a minus sign. 
We thus note that there is always a valid solution for all measurement parameters by taking $(\alpha_2, \alpha_3, \alpha_4) = (0,0,0)$ and $\alpha_0 \leq \frac{s^1_1}{s^1_0} \alpha_1$.
However, we note that the resulting Bell operator has limited utility. 
It only consists of single-body terms and does not allow for a gap between $\beta_Q$ and $\beta_C$. 

To obtain an explicit parametrization of all permissible stoquastic Bell coefficients $\vec{\alpha} \in \mathbbm{S}^{n,2}_{\varphi,\theta}$, we move to the dual description and solve it analytically, cf. App. \ref{app:TwoBodyPolyhedron}.
In there, we provide explicit expressions for the rays $r_i$ and lines $l_j$ of the cone $\mathbbm{S}^{n,2}_{\varphi,\theta}$ in terms of $\varphi, \theta, n$.
The most interesting result is that the cone $\mathbbm{S}^{n,2}_{\varphi,\theta}$  is generally characterized by three rays and two lines. 
Therefore, each point $\vec{\alpha} \in \mathbbm{S}^{n,2}_{\varphi,\theta}$ can be written as a weighted combination of 5 variables
\begin{equation}
    \label{eq:TwoBodyRaysAndLines}
    \vec{\alpha} = c_1 \vec{r_1} + c_2 \vec{r_2} + c_3 \vec{r_3} + c_4 \vec{l_1} + c_5 \vec{l_2},
\end{equation}
with $c_1,c_2,c_3 \in \mathbbm{R}_{\geq 0}$ and $c_4, c_5 \in \mathbbm{R}$. 
The small number of variables required to parameterize the Bell coefficients $\alpha$ makes it suitable for numerical optimization. 

To demonstrate this utility,  we optimize the two-body permutationally invariant Bell operator with respect to the quantum-classical gap $\beta_Q/\beta_C$ at $(\varphi, \theta) = (\pi/6, 5\pi/6)$, which are the optimal measurement parameters for Ineq. \eqref{eq:Ineq6}.
The expression of the rays and lines for these measurement parameters are
\begin{equation}
\label{eq:I6RaysAndLines}
\begin{aligned}
    \vec{r_1}&=(-\sqrt{3} (n-1),0,1,-1,0),\\
    \vec{r_2}&=(-\sqrt{3} (n-1),0,-1,1,0),\\
    \vec{r_3}&=(0,0,0,-1,0),\\
    \vec{l_1}&=(-1,1,0,0,0),\\
    \vec{l_2}&=(0,0,1,-2,1).
\end{aligned}
\end{equation}
Numerically optimizing $\vec{c}$ for the quantum-classical gap $\beta_Q/\beta_C$, we obtain the coefficients listed in Table \ref{tb:OptGap}, cf. App. \ref{app:TwoBodyConeOpt}.
The optimized stoquastic Bell operators yield values of $\beta_Q/\beta_C$ that closely match those of the Bell operator of Eq. \eqref{eq:Ineq6Op}.
This suggest that the Bell operator of Eq. \eqref{eq:Ineq6Op} is optimal with respect to stoquasticity. 
Additionally, the ground states of the optimized Bell operator across all values of $n$ are Gaussian-like, as is displayed in Fig. \ref{fig:OptGS}. 

\subsection{Three-body scenario}\label{sec:3BodyRes}

Three-body permutationally invariant Bell operators constitute the next level of complexity, where stoquasticity constraints remain tractable while the structure of the admissible coefficient space becomes substantially richer.
As every two-body permutationally invariant Bell operator admits a choice of Bell coefficients and measurement parameters that renders it stoquastic in the symmetric subspace, it is natural to ask whether analogous statements can be found for three-body operators.
To this end, let us first define the three-body permutationally invariant Bell operator as
\begin{equation}
\begin{aligned}
 \label{eq:ThreeBodyBellOp}
 \mathbf{B}_3 \coloneqq \bigg[ &\alpha_0  \mathbf{S}_0 + \alpha_1  \mathbf{S}_1 + \alpha_2 \mathbf{S}_{00} + \alpha_3  \mathbf{S}_{01} + \alpha_4 \mathbf{S}_{11} +  \\ 
 & \alpha_5 \mathbf{S}_{000} +  \alpha_6 \mathbf{S}_{001} +  \alpha_7 \mathbf{S}_{011}+  \alpha_8 \mathbf{S}_{111} \bigg] . 
 \nonumber
 \end{aligned}
\end{equation}
For the three-body scenario, the number of non-redundant hyperplanes does not collapse to three, but generally depends on the number of parties $n$.
This complicates the analytical treatment. 
Nonetheless, we can show that the Bell operator can be made stoquastic for all measurement parameters, as is captured by the following theorem. 
\begin{thm}
    The three-body permutationally invariant Bell operator $\mathbf{B}_3$ can be made stoquastic in the symmetric block for all $\varphi,\theta$. 
\end{thm}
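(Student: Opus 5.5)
The strategy is to exhibit, for every fixed $(\varphi,\theta)$, explicit nontrivial points of the stoquasticity cone $\mathbbm{S}^{n,3}_{\varphi,\theta}$. Ideally these points have nonzero three-body coefficients, so that the statement is not settled by the trivial single-body solution already noted in the two-body case. The key observation is that a suitable linear combination of the two local measurements is purely diagonal. From Eq.~\eqref{eq:QubitMeas},
\begin{equation}
\sin\theta\,\mathbf{M}_0-\sin\varphi\,\mathbf{M}_1=\sin(\theta-\varphi)\,Z .
\nonumber
\end{equation}
Expanding the symmetrized tensor products of this combination gives Bell coefficients whose PI measurement operators are built only from $Z$'s. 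For order two this is $\alpha_{00}=\sin^2\theta$, $\alpha_{01}=-2\sin\theta\sin\varphi$ (with the convention that $\mathbf{S}_{01}$ counts both orderings), and $\alpha_{11}=\sin^2\varphi$. For order three it is $(\alpha_5,\alpha_6,\alpha_7,\alpha_8)\propto(\sin^3\theta,-3\sin^2\theta\sin\varphi,3\sin\theta\sin^2\varphi,-\sin^3\varphi)$, up to the ordering convention. Operators of this form are polynomials in $S_Z$ alone (cf. App.~\ref{app:HighToSingle}), so they are diagonal in the Dicke basis. They therefore contribute nothing to any off-diagonal element, which means they are lines of $\mathbbm{S}^{n,3}_{\varphi,\theta}$. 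This generalizes the line $\vec{l}_2=(0,0,1,-2,1)$ of Eq.~\eqref{eq:I6RaysAndLines}.

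To make the operator genuinely stoquastic, and not merely diagonal, I would add a single-body part $\alpha_0\mathbf{S}_0+\alpha_1\mathbf{S}_1$. Its only off-diagonal contribution is $(\alpha_0\sin\varphi+\alpha_1\sin\theta)\,S_X$. Since $S_X$ has nonnegative first off-diagonal entries in the Dicke basis, the choice $\alpha_0\sin\varphi+\alpha_1\sin\theta\le 0$ makes every first off-diagonal element nonpositive, while the second and third off-diagonals stay zero. Any $\vec\alpha$ of the form ``admissible single-body part $+$ arbitrary combination of the diagonal two- and three-body lines'' then lies in $\mathbbm{S}^{n,3}_{\varphi,\theta}$ for every $n$. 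This proves the theorem whenever $\sin(\theta-\varphi)\neq0$.

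Two further steps remain. First, the degenerate case $\sin(\theta-\varphi)=0$, i.e. $\mathbf{M}_1=\pm\mathbf{M}_0$. Here all $\mathbf{S}_{\vec x}$ reduce to $\pm$ the corresponding all-$\mathbf{M}_0$ operators. I would handle it by rotating with $\mathbf{U}(\Delta)^{\otimes n}$, using the covariance $\mathbf{B}(\varphi+\Delta,\theta+\Delta)=\mathbf{U}(\Delta)^{\otimes n}\mathbf{B}(\varphi,\theta)\mathbf{U}(\Delta)^{\dagger\otimes n}$. Alternatively, I would pick coefficients with $\alpha_{\vec x}$ summing to zero within each order, which kills the operator up to the single-body term. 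Second, to obtain a less trivial family, one can allow off-diagonal two- and three-body terms of the form (diagonal factor)$\times$(single $X$-type factor). An example is $S_Z\,S_X$-type products generated by $(\sin\theta\mathbf{M}_0-\sin\varphi\mathbf{M}_1)^{\otimes 2}\otimes\mathbf{M}_x$. These contribute only to the first off-diagonal, with entries proportional to linear functions of $k$ times $\sqrt{(n-k)(k+1)}$. Stoquasticity then reduces to a sign condition on an affine function of $k$ over $k\in\{0,\dots,n-1\}$, which only needs to be checked at the endpoints.

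I expect the main obstacle to be the bookkeeping in this last step. Converting the symmetrized products into the $\mathbf{S}_{\vec x}$ basis produces lower-order correction terms, because the sums over distinct indices generate $Z^2=\mathbb{I}$ and $ZX$ terms. One must verify that these corrections do not create positive second or third off-diagonal entries. For the purely diagonal lines this is automatic. For the mixed terms I would first try to verify it via the explicit Dicke-basis formulas of App.~\ref{app:HighToSingle}.
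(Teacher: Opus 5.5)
Your argument is correct, and it proves the statement by a different route from the paper.

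\textbf{The paper's route.} The paper uses only the first setting, $\vec\alpha=(\alpha_0,0,\alpha_2,0,0,\alpha_5,0,0,0)$, and exploits the triangular structure of the off-diagonals. The third off-diagonal fixes the sign of $\alpha_5$ through $\sin^3\varphi\,\alpha_5\le0$. Then $\alpha_2$ is taken negative enough to dominate the $k$-dependent three-body contribution on the second off-diagonal. Finally $|\alpha_0|$ is taken large with $\mathrm{sgn}(\alpha_0)=-\mathrm{sgn}(\sin\varphi)$ for the first.

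\textbf{Your route.} You use the identity $\sin\theta\,\mathbf{M}_0-\sin\varphi\,\mathbf{M}_1=\sin(\theta-\varphi)Z$. With the paper's convention for $\mathbf{S}_{\vec x}$, your coefficient vectors are exactly $\sin^2\varphi\,\vec l_2$ and $-\sin^3\varphi\,\vec l_3$. They realize $\sin^2(\theta-\varphi)\mathbf{S}_{ZZ}$ and $\sin^3(\theta-\varphi)\mathbf{S}_{ZZZ}$, and $\sin\varphi\,\vec l_1$ is the order-one instance, $-\sin(\theta-\varphi)\mathbf{S}_Z$.

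\textbf{What each buys.} Your route gives a structural reason, valid for all $n,\varphi,\theta$, why these vectors are lines and why the lines split by order; the paper establishes this only by direct verification. It also yields an explicit, $n$-independent family $(\alpha_0\sin\varphi+\alpha_1\sin\theta)\mathbf{S}_X+f(\mathbf{S}_Z)$, with $f$ cubic. This family is stoquastic in every block and even directly in the computational basis, without the shift of Theorem~\ref{thm:DickeToComp}. It is not a vacuous family: at $(\pi/6,5\pi/6)$ the operator \eqref{eq:Ineq6Op} equals $-2\mathbf{S}_0+\tfrac32\mathbf{S}_{ZZ}$. What your route does not give is any off-diagonal two- or three-body content, since your higher-order terms are stoquasticity-neutral. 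The paper's three-body term, by contrast, genuinely populates the second and third off-diagonals (for $\sin\varphi\neq0$) and needs only one setting.

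\textbf{Corrections to your ``remaining steps''.} Neither point affects the theorem.
\begin{itemize}
\item The degenerate case $\sin(\theta-\varphi)=0$ needs no separate treatment. Your coefficient vectors are then mapped to the zero operator (e.g.\ $\mathbf{S}_{00}-2\mathbf{S}_{01}+\mathbf{S}_{11}=0$ when $\mathbf{M}_1=\mathbf{M}_0$), so they are still lines. The single-body part with $\alpha_0\sin\varphi+\alpha_1\sin\theta\le0$ remains stoquastic, so your main argument already covers all $(\varphi,\theta)$.
\item Your proposed fix via $\mathbf{U}(\Delta)^{\otimes n}$ would not work anyway. It shifts both angles by $\Delta$, so $\theta-\varphi$ is unchanged, and it alters the measurement parameters the statement holds fixed. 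If you want a nonzero three-body term when $\mathbf{M}_1=\pm\mathbf{M}_0$, you need a paper-style argument with $\mathbf{S}_{000}$.
\item In the optional mixed-term extension, the first off-diagonal entries of $\mathbf{S}_{ZZX}$ are $\left[(n-1-2k)^2-(n-1)\right]\Gamma(k)$. This is quadratic, not affine, in $k$. Checking $k=0$ and $k=n-1$ therefore suffices only when this term enters with a nonnegative coefficient.
\item No lower-order corrections arise when expanding in the $\mathbf{S}_{\vec x}$ basis, because the distinct-index sum is multilinear in the single-site operators.
\end{itemize}
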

\begin{proof}
To prove that for all $\varphi,\theta$ there exists a combination of Bell coefficients $\vec{\alpha}$ such that the corresponding three-body Bell operator $\mathbf{B}_3$ is stoquastic, it is sufficient to prove that for $\vec{\alpha}=(\alpha_0,0,\alpha_2,0,0,\alpha_5,0,0,0)$ the Bell operator $\mathbf{B}_3$ is stoquastic.
As $\alpha_0$ corresponds to $\mathbf{S}_0$, $\alpha_2$ corresponds to $\mathbf{S}_{00}$ and $\alpha_5$ corresponds to $\mathbf{S}_{000}$, a PI measurement operator of each order is included. However, notice that through this choice of $\vec{\alpha}$ we have eliminated any dependence on $\theta$. The explicit expressions for the matrix elements of the three-body PI measurement operator can be found in App. \ref{app:HighToSingle}.

Since only three-body PI measurement operators have nonzero elements on the third 
off-diagonal, the stoquasticity condition on the third off-diagonal reduces to
\begin{align*}
     \sin ^3(\varphi) \alpha_5 \leq 0. 
\end{align*}
which can always be satisfied by an appropriate choice of $\alpha_5$.
For fixed $\alpha_5$, the stoquasticity condition on the second off-diagonal is
\begin{align*}
 \sin ^2(\varphi ) \alpha_2 + (n -2 k-2) \sin ^2(\varphi ) \cos (\varphi ) \alpha_5 \leq 0,
\end{align*}
where $k \in \{0,1,\ldots,n-2\}$ indexes the matrix elements on the second off-diagonal.
This is a finite linear inequality in $\alpha_2$, and can therefore always be satisfied 
by choosing $\alpha_2$ sufficiently negative.
For fixed $\alpha_2$ and $\alpha_5$, the stoquasticity condition on the first 
off-diagonal is
\begin{equation}
\begin{aligned}
\sin (\varphi ) \alpha_0 
+  (-2 k+n-1) \sin (2 \varphi )   \alpha_2 \\
+ \bigg[ 3 \sin (\varphi ) \left(4 k^2-4 k (n-1)+n^2-3 n+2\right) \cos ^2(\varphi ) \\
+k (-k+n-1) \sin ^2(\varphi) \bigg] \alpha_5 \leq 0. 
\end{aligned}
\label{eq:threeBodyFirstDiag}
\nonumber
\end{equation}
which is likewise a finite linear inequality in $\alpha_0$, and can always be satisfied 
by choosing $\alpha_0$ such that $\textrm{sgn}(\alpha_{0})=-\textrm{sgn}(\sin (\varphi))$ and $|\alpha_{0}|$ sufficiently large.
Since all other off-diagonals vanish for our choice of $\vec{\alpha}$, the 
stoquasticity conditions can be simultaneously satisfied for any $\varphi, \theta$.
\end{proof}
As the number of hyperplanes increases with $n$, the number of rays generally increases with $n$ as well. 
The complexity of the characterization of $\mathbbm{S}^{n,3}_{\varphi,\theta}$ therefore grows with $n$.
At the measurement parameters $(\varphi,\theta) = (\pi/6,5\pi/6)$ and values $n \in \{10,20,30,40,50\}$ for example, we observe that the total number of extremal rays is $\{13,23,33,43,53\}$.
However, for other values of measurement parameters, this is generally not the case, as in addition to $n$, the total number of rays depends on $(\varphi,\theta)$.
Analytically, we find the following expressions for the lines
\begin{align*}
    \vec{l}_1 = (-\csc(\phi) \sin(\theta) , 1 , 0 , 0 , 0, 0,0,0,0), 
\end{align*}
\begin{align*}
        \vec{l}_2 = (0,0,\sin ^2(\theta ) \csc ^2(\phi ),-2 \sin (\theta ) \csc (\phi ),1, 0,0,0,0), 
\end{align*}
\begin{equation}
\begin{aligned}
    \vec{l}_3 = (0,0,0,0,0,&\sin ^3(\theta ) \left(-\csc ^3(\phi )\right),\\
    &3 \sin ^2(\theta ) \csc ^2(\phi ),-3 \sin (\theta ) \csc (\phi ),1), \nonumber
\end{aligned}
\end{equation}
cf. App. \ref{app:ThreeBodyLines} for details on how they are obtained. 
Numerical evidence suggests that up to $n=50$, there are indeed only three lines and that the number is independent of $n, \varphi, \theta$. 

We notice that there is a splitting in the lines, where $\vec{l}_1$ is only non-zero at the single-body entries, $\vec{l}_2$ is only non-zero at the two-body entries and $\vec{l}_3$ is only non-zero at the three-body entries.
This suggests that terms in the Bell operators associated with PI measurement operators of different orders $K$ cannot generally cancel each other. 
This is expected as terms corresponding to different orders scale differently in $n,k$, cf. App. \ref{app:HighToSingle} for the analytical expressions.

Similarly to the two-body scenario, we now optimize over the three-body cone. 
Our numerical optimization suggests that at $(\varphi,\theta) = (\pi/6, 5\pi/6)$, the quantum-classical gap of Eq. \eqref{eq:Ineq6Op} is already optimal with respect to stoquasticity.
Additionally, for the measurement parameters $(\varphi,\theta) = (\pi/4, -\pi/4)$, our optimization suggest that the operator of the form 
\begin{equation}
    \mathbf{B} = -2 \mathbf{S}_0 + \frac{1}{2} \mathbf{S}_{00} +  \mathbf{S}_{01} +\frac{1}{2} \mathbf{S}_{11},
    \label{eq:pi4Op}
\end{equation}
is optimal. 
For $n=10,20,30,50$, the corresponding values for the quantum-classical gaps are $\beta_{Q}/\beta_{C}=1.02904, 1.06576, 1.08556, 1.10791$ respectively. 
We note the remarkable similarity between Eq. \eqref{eq:pi4Op} and Eq. \eqref{eq:Ineq6Op}, with the difference being a factor $-1$ in the $\mathbf{S}_{01}$ term. 
The former can be transformed into the latter without affecting the quantum-classical gap by replacing $\mathbf{S}_{01}$ by $ -\mathbf{S}_{01}$.
This is equivalent to relabeling $1 \rightarrow -1$ and $-1 \rightarrow 1$ for the outcomes of $\mathbf{M}_1$. 
As $\mathbf{S}_{11}$ contains pairs of $\mathbf{M}_1$ operators, the sign changes cancel each other and the operator remains the same. 

In Fig. \ref{fig:I6Scatter}, we have plotted the measurement parameters $(\varphi, \theta)$ for which Bell operators of the form Eq. \eqref{eq:Ineq6Op} produce quantum-classical gaps $\beta_{Q}/\beta_{C} \geq 1$, where $\beta_Q$ denotes the minimal eigenvalue of the Bell operator and $\beta_C$ denotes the minimum classical value.
Remarkably, the area of the region with such violations is dominant and as it increases with $n$, suggesting that Eq. \eqref{eq:Ineq6Op} is a good candidate for providing stoquastic Bell violations for a substantial range of the measurement parameters. 
Additionally, we remark that the optimized Bell operators do not contain three-body terms, suggesting that two-body operators suffice to achieve the optimal quantum-classical gap within the stoquastic cone at the measurement parameters considered. 
Whether three-body or higher-order terms can yield strictly larger gaps remains an open question.

\section{Higher order operators}\label{sec:HighOrderDistr}

So far, the study of PI Bell operators has primarily focused on ground states that are Gaussian-like or of the spin-squeezed form~\cite{tura_nonlocality_2015,schmied_bell_2016,engelsen_bell_2017}.
As noted before, the ground state $\ket{\phi}$ of a stoquastic Hamiltonian has only nonnegative elements and the square of its elements can therefore be interpreted as a probability distribution. 
In this section, we will show that any state with arbitrary nonnegative amplitudes in the Dicke basis can be realized as the ground state of a suitable linear combination of PI measurement operators restricted to the symmetric subspace $\mathcal{H}_{\text{sym}}$. 
This means that arbitrary discrete probability distributions of $n+1$ values can be sampled by measuring in the Dicke basis $\{ \ket{D_n^k} \}_{k=0}^n$.

To see this, let us consider the state
\begin{align}
    \ket{\phi} = \sum_k \phi_k \ket{D_n^k},
    \label{eq:stoqGS}
\end{align}
with $\phi_k \in \mathbbm{R}_{\geq 0}$ and $\sum_{k=0}^n \phi_k^2 =1$.
Then the following operator 
\begin{equation}
    H_{\phi} \coloneqq \mathbbm{I} - \ket{\phi}\bra{\phi}
    \label{eq:StoqParHam}
\end{equation}
is a symmetric and stoquastic matrix in the Dicke basis. 
We note that by construction $H_{\phi}$ is a parent Hamiltonian of $\ket{\phi}$, i.e. $\ket{\phi}$ is its ground state, albeit its expression may potentially involve up to $n$-partite operators.
By construction, $H_{\phi}$ is defined on the symmetric subspace, however it can be extended to all blocks by extending its support to the full space. 
As the ground state remains unchanged after this extension, we can w.l.o.g. focus on the symmetric subspace for the rest of the argument.
To see that any such $H_{\phi}$ can be written as a linear combination of PI measurement operators, let us first note that the $K$-body PI Pauli operators 
\begin{equation}
    \mathbf{T}_{\vec{w}(K)}
\coloneqq \sum_{\substack{\vec{i}=(i_1,\dots,i_n) \in \mathbbm{Z}_4^{\times n}\\
\vec{w}(K) = (w_1(\vec{i}), w_2(\vec{i}), w_3(\vec{i}) )} }
\mathbf{P}(\vec{i}),
\end{equation}
when projected onto the symmetric space, forms an overcomplete set of generators for the space of symmetric Hamiltonians.
Here $\mathbf{P}(\vec{i}) := P(i_1) \otimes P(i_2) \otimes \cdots \otimes P(i_n)$ represents the Pauli string acting on $n$ qubits, $P(0),P(1),P(2),P(3)= \mathbbm{I},X,Y,Z$ respectively, $\vec{w}(K)$ a vector of three nonnegative integers that sums to $K$ and $w_1(\vec{i})$ counts the number of $1$'s in $\vec{i}$, $w_2(\vec{i})$ the number of $2$'s and $w_3(\vec{i})$ the number of $3$'s. 
For example the operator $\mathbf{S}_{XXZY}$ corresponds to $\mathbf{T}_{(2,1,1)}$ and  $\mathbf{S}_{XY}$ corresponds to $\mathbf{T}_{(1,1,0)}$. 

We will now show that the set $\{\mathbf{T}_{\vec{w}(K)} \big|_{\text{sym}} \}$ over all possible vectors $\vec{w}(K)$ over all $K$ such that $1 \leq K \leq n$ forms a generating set for the operators supported on the symmetric subspace $\mathcal{H}_{\text{sym}}$, where $\mathbf{T}_{\vec{w}(K)} \big|_{\text{sym}}$ denotes the restriction of $\mathbf{T}_{\vec{w}(K)}$ to $\mathcal{H}_{\text{sym}}$. 
Consider the matrix $\mathbf{A}$ which has support only on $\mathcal{H}_{\text{sym}}$.
Any Hermitian matrix $\mathbf{A}$ can be written as a real linear combination of Pauli strings of length $n$, i.e. $\mathbf{A} = \sum_{\vec{i}} c_{\vec{i}} \mathbf{P}(\vec{i})$, since the set of Pauli strings of length $n$ forms a basis for the real vector space of Hermitian operators acting on $n$ qubits.
As $\mathbf{A}$ only has support on $\mathcal{H}_{\text{sym}}$, it satisfies $\mathbf{A} = \frac{1}{n!}\sum_{\sigma \in \Pi_n} \mathbf{U}_\sigma \mathbf{A} \mathbf{U}^\dagger_\sigma$, where $\Pi_n$ is the symmetric group over $n$ elements. 
Therefore 
\begin{equation}
\begin{aligned}
\mathbf{A} =& \frac{1}{n!}\sum_{\sigma \in \Pi_n} \mathbf{U}_\sigma \sum_{\vec{i}} c_{\vec{i}} \mathbf{P}(\vec{i}) \mathbf{U}^\dagger_\sigma \nonumber \\
= & \frac{1}{n!} \sum_{\vec{i}} c_{\vec{i}} \mathbf{T}_{\vec{w}(K)}, \nonumber
\end{aligned}
\end{equation}
where from the second to the third line, we have used that $ \sum_{\sigma \in \Pi_n} \mathbf{U}_\sigma \mathbf{P}(\vec{i}) \mathbf{U}^\dagger_\sigma = \mathbf{T}_{\vec{w}(K)}$.
As $\mathbf{A} = \mathbf{A} \big|_{\text{sym}}$ by construction, we have $\mathbf{A} = \frac{1}{n!} \sum_{\vec{i}} c_{\vec{i}} \mathbf{T}_{\vec{w}(K)}\big|_{\text{sym}}$.
This means that any operator with only nonzero support in the symmetric subspace $\mathcal{H}_{sym}$ can be written as a linear combination of $K$-body PI Pauli strings restricted to $\mathcal{H}_{sym}$.
This means that for every state $\ket{\phi}$ of the form Eq. \eqref{eq:stoqGS}, there is an associated parent Hamiltonian of the form of Eq.
\begin{equation}
     H_\phi = \sum_K \sum_{\vec{w}(K)} \gamma_{\vec{w}(K)} \mathbf{T}_{\vec{w}(K)},
     \label{eq:ParentHamDecomp}
\end{equation}
where $\gamma_{\vec{w}(K)} \in \mathbbm{R}$ are the weights of the linear combination.

We note, however, that $K$ will be determined by the Hamming weight of those $\vec{i}$ such that $c_{\vec{i}}\neq 0$, which can go up to $n$ in general.
To generate arbitrary probabilities, PI measurement operators of order $K \sim \mathcal{O}(n)$ are thus generally needed. 
While for parent Hamiltonians of some Gaussian-like states it is sufficient to consider PI operators of order two, cf. Eq. \eqref{eq:GaussianDickeCoeff}, generally this is not the case as we will illustrate with the n-qubit GHZ state 
\begin{equation}
    \ket{\text{GHZ}} = \frac{\ket{0}^{\otimes n} + \ket{1}^{\otimes n}}{\sqrt{2}}.
\end{equation}
In the symmetric Dicke basis, the parent Hamiltonian is of the form 
\begin{equation}
    \mathbbm{I} - \frac{\ket{D^0_n} \bra{D^0_n} + \ket{D^0_n} \bra{D^n_n} + \ket{D^n_n} \bra{D^0_n} + \ket{D^n_n} \bra{D^n_n}}{2}.
\end{equation}
In addition to diagonal matrix elements, it also has matrix elements at the top right corner $(k,l) = (0,n)$ and the bottom left corner $(k,l) = (n,0)$. 
This can only be achieved through a full body PI Pauli operator ($K=n$), since the term $\ket{D_{n}^{0}}\!\bra{D_{n}^{n}}=\ket{00\ldots0}\!\bra{11\ldots1}$ only appears in the operators $T_{\vec{w}(K)}$ with $K=n$. 
Generally, for arbitrary probability distributions over $k$, the corresponding state $\ket{\phi}$ requires $H_{\phi}$ to contain $\mathbf{T}_{\vec{w}(K)}$ of arbitrary order. 
The exploration of such parent Hamiltonians containing higher order operators is left for future work. 

\section{Conclusion and outlook}\label{sec:Conclusion}

In this work, we have established the first systematic connection between PI Bell operators and stoquasticity in the binary-input and binary-output scenario. As a motivating example, we showed that the Bell operator used in the largest many-body Bell experiments is stoquastic at its optimal measurement parameters. We then showed that stoquasticity is a general feature of a class of tangent two-body PI Bell inequalities, deriving explicit conditions on the measurement parameters under which the entire class is stoquastic. To go beyond this class, we introduced the stoquasticity cone, which provides a full characterization of all stoquastic PI Bell operators via its extremal rays and lines. Using this framework, we showed that all two-body and three-body PI Bell operators can be made stoquastic at any measurement parameters, and that the Bell operator of the motivating example is optimal with respect to stoquasticity.

Our results suggest that stoquasticity is not a fine-tuned property of PI Bell operators, but rather a robust feature that can always be enforced through an appropriate choice of Bell coefficients. The fact that the experimentally relevant Bell operator of Refs.~\cite{schmied_bell_2016,engelsen_bell_2017} is optimal with respect to stoquasticity at its natural measurement parameters suggests a deeper connection between experimental accessibility and stoquasticity that warrants further investigation. Furthermore, the parent Hamiltonian construction reveals that stoquastic PI Bell operators are surprisingly expressive.
While Gaussian-like ground states, which are the ones naturally arising in current experiments \cite{schmied_bell_2016, engelsen_bell_2017}, require only two-body correlators, accessing the full range of probability distributions requires higher-order operators of order $K \sim \mathcal{O}(n)$.

Several directions remain open for future investigation. A full analytical characterization of the extremal rays of the three-body stoquasticity cone for general system sizes remains an open problem. The framework of the stoquasticity cone is not specific to the PI setting and could in principle be extended to more general Bell scenarios, opening up the possibility of connecting stoquasticity to nonlocality beyond the permutationally invariant case. On the side of higher-order operators, developing an efficient framework for systematically investigating PI Bell operators beyond the three-body level is an important next step, as the complexity of PI measurement operators grows quickly with the operator order $K$. More broadly, characterizing how the geometry of the stoquasticity cone evolves with the order $K$ would reveal how the accessible stoquastic Bell correlations expand as higher-order terms are included.

\acknowledgements
J.L. and J.T. acknowledge the support received by the Dutch National Growth Fund
(NGF), as part of the Quantum Delta NL programme. 
J.T. acknowledges the support received from the European Union's Horizon Europe research and innovation programme through the ERC StG FINE-TEA-SQUAD (Grant No.~101040729). 
This publication is part of the `Quantum Inspire - the Dutch Quantum Computer in the Cloud' project (with project number [NWA.1292.19.194]) of the NWA research program `Research on Routes by Consortia (ORC)', which is funded by the Netherlands Organisation for Scientific Research (NWO).
This work was supported by the Netherlands Organization for Scientific Research (NWO/OCW), as part of Quantum Limits (project number SUMMIT.1.1016).

The views and opinions expressed here are solely
those of the authors and do not necessarily reflect those of the funding institutions. 
None of the funding institutions can be held responsible for them.

\begin{figure*}[h]
    \centering
    \includegraphics[width=0.4\textwidth]{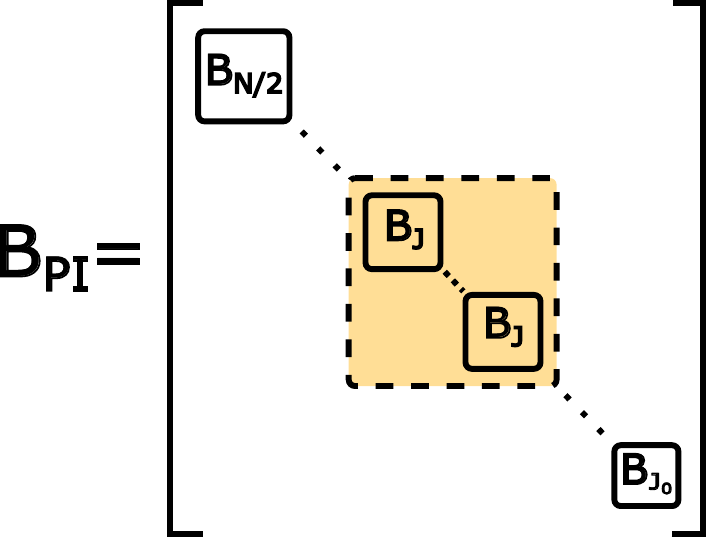}
    \caption{Schematic representation of the block structure of the permutationally invariant operator $B_{\mathrm{PI}}$.
    Here $J \in \{J_0, J_0 + 1, \ldots, n/2\}$ with $J_0 =0$ if $n$ is even and $J_0 =1/2$ if $n$ is odd}
    \label{fig:rhoPI_block}
\end{figure*}
\begin{figure*}[h]
\centering

\subfloat[$n=10$]{%
  \includegraphics[width=0.24\textwidth]{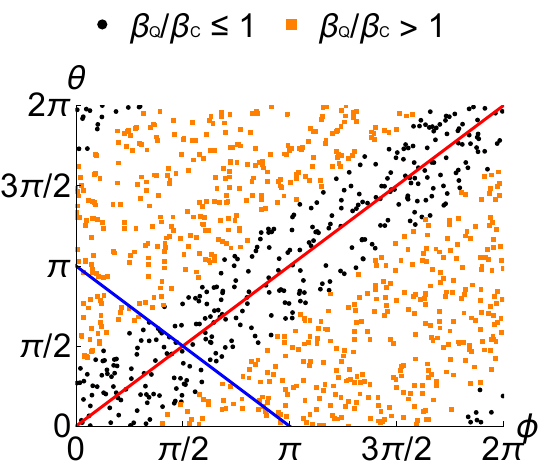}
}
\hfill
\subfloat[$n=20$]{%
  \includegraphics[width=0.24\textwidth]{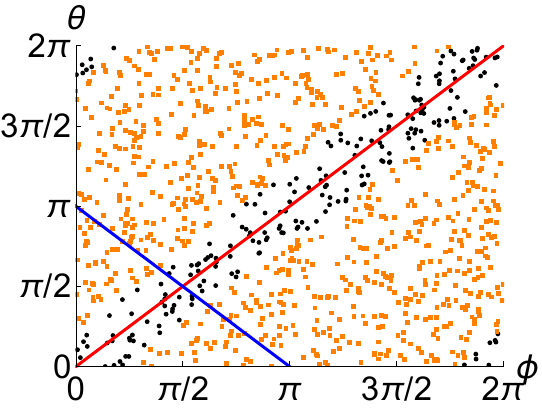}
}
\hfill
\subfloat[$n=30$]{%
  \includegraphics[width=0.24\textwidth]{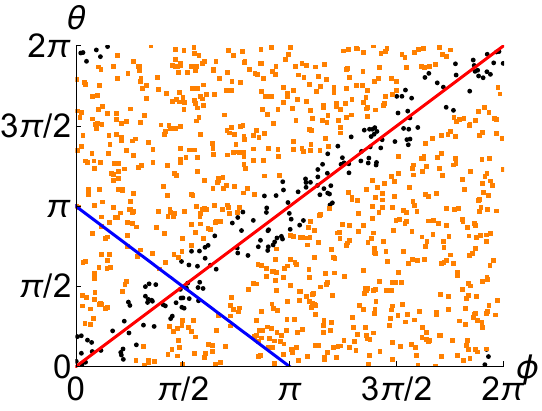}
}
\hfill
\subfloat[$n=50$]{%
  \includegraphics[width=0.24\textwidth]{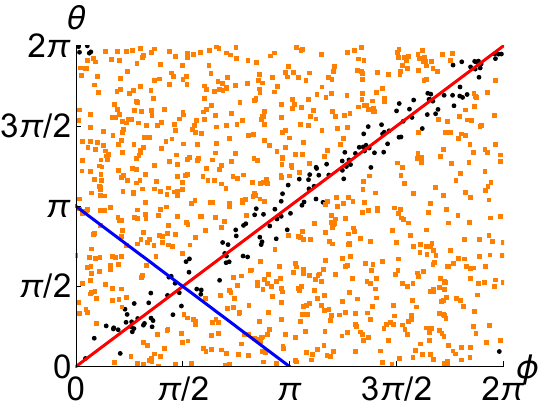}
}
\caption{The orange points denote values of measurement parameters $(\varphi,\theta)$ with a quantum violation above 1 and the black points denote measurement parameters without such a violation. 
The blue line denotes $\pi-\varphi = \theta$ and the red line denotes $\varphi = \theta$, which correspond to values that satisfy the stoquastic conditions of Eq. \eqref{eq:Ineq6StoqCond}.
For the blue line $(\pi-\varphi = \theta)$, we observe that the segment which extends into the orange region grows with increasing $n$.}
\label{fig:I6Scatter}
\end{figure*}
\begin{figure*}[h]
\centering

\subfloat[$n=10$]{%
  \includegraphics[width=0.24\textwidth]{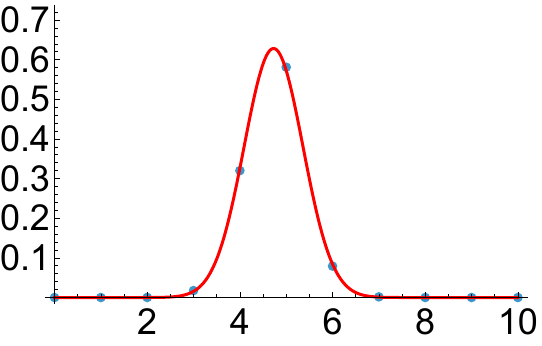}
}
\hfill
\subfloat[$n=20$]{%
  \includegraphics[width=0.24\textwidth]{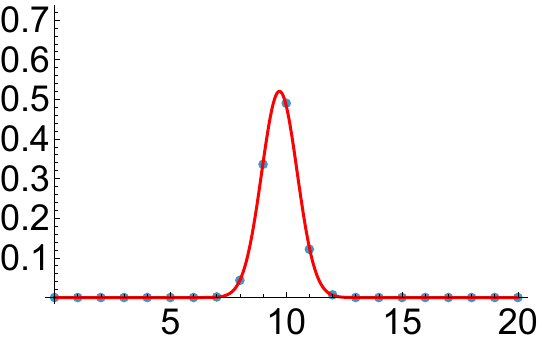}
}
\hfill
\subfloat[$n=30$]{%
  \includegraphics[width=0.24\textwidth]{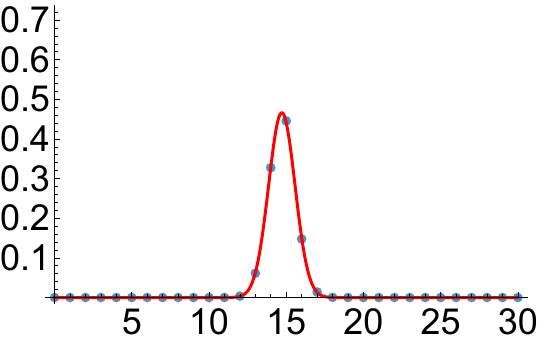}
}
\hfill
\subfloat[$n=50$]{%
  \includegraphics[width=0.24\textwidth]{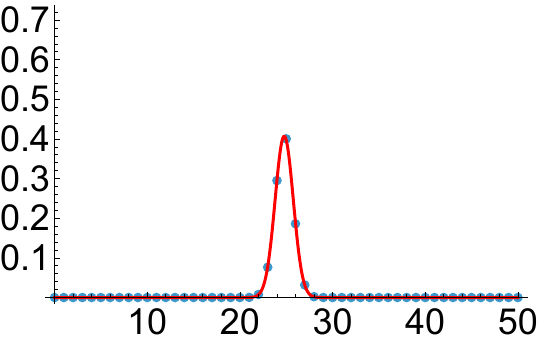}
}
\caption{Plots of coefficients of eigenvectors from numerically optimized Bell operators of table \ref{tb:OptGap}. The blue dots represent $\psi_i^2$, where $\psi_i$ is the $i-$th element of the ground state eigenvector up to normalization. 
The red line represent the fitted normal distribution.}
\label{fig:OptGS}
\end{figure*}
\vspace{-20cm}

\begin{table*}[h]
  \centering
  \begin{tabular}{|c|c|c|c|c|}
    \hline
    $n$ & Optimized $\vec{c}$ & $\vec{\alpha}$ & Opt. $\beta_Q/\beta_C$ & $\beta_Q/\beta_C$ Eq. \eqref{eq:Ineq6Op}\\ 
    \hline
    10 & $(6.95 \mathrm{e}{-2}, 7.11 \mathrm{e}{-2}, 1.24 \mathrm{e}{-5}, 1.71 \mathrm{e}{-3}, 5.50 \mathrm{e}{-1})$ &  (-2.20, 0.00171, 0.548, -1.10, 0.550) & 1.0541 & 1.05442 \\
    20 & $(4.44\mathrm{e}{-2},4.96\mathrm{e}{-2},0,2.52\mathrm{e}{-3},7.78\mathrm{e}{-1})$ &  (-3.09, 0.00252, 0.772, -1.55, 0.778) & 1.09688 & 1.09762\\
    30 & $(2.10\mathrm{e}{-2},2.03\mathrm{e}{-2},1.16\mathrm{e}{-5}, 4.22\mathrm{e}{-3},5.18\mathrm{e}{-1})$ & (-2.08, 0.00422, 0.519, -1.04, 0.518) &  1.11984 & 1.12005\\
    50 & $(2.00\mathrm{e}{-2},1.88\mathrm{e}{-2},0,-3.78\mathrm{e}{-1},8.20\mathrm{e}{-1})$ & (-2.90, -0.378, 0.821, -1.64, 0.820) &  1.14266 & 1.14482\\
    \hline
  \end{tabular}
  \caption{Values for numerically optimized stoquastic Bell operators at $(\varphi,\theta)=(\pi/6,5\pi/6)$.}
  \label{tb:OptGap}
\end{table*}

\clearpage
\pagestyle{plain}
\bibliography{ArXivReferences}

\clearpage
\onecolumngrid
\appendix

\section{Bell scenario}\label{app:BellScenario}

The typical Bell scenario consists of $n$ spacelike separated parties, each labelled by some index $i$. Each party has its own quantum system  
and can select an input $x_i$ from an set $\mathcal{X}$.
In response, the system produces an output $a_i$ selected from an set $\mathcal{A}$ with some probability.
The input and output sets are chosen to be the same for all parties. 
In this work, we focus on the scenario where both $\mathcal{X}$ and $\mathcal{A}$ have two elements, labeled $\{0,1\}$ and  $\{-1,+1\}$ respectively. 
The general probability distribution that characterizes all input and output correlations of all parties is denoted as $P(\vec{a}|\vec{x})$ with $\vec{a}=(a_0,\ldots,a_{n-1})$ and $\vec{x}=(x_0,\ldots,x_{n-1})$.
In this context, the probability distribution itself is usually referred to as a correlation.
In classical physics, there is the condition the output $a_i$ of party $i$ can only depend on the input $x_i$ of party $i$.
Additionally, all parties may have a shared source of randomness $p(\lambda)$. 
Classical probability distributions must thus be of the form 
\begin{equation}
\begin{aligned}
P_C(\vec{a}\mid \vec{x})
&= \int_{\Lambda} d\lambda \,p(\lambda)
   \prod_{i=0}^{n-1} P(a_i \mid x_i,\lambda). 
   \label{eq:LocalDistrApp}
\end{aligned}
\end{equation}
The parameter $\lambda$ is also called the hidden variable and this model for classical probability distributions is referred to as the local hidden variable model. 

In the quantum setting, probabilities of outcomes are given by the Born rule as follows
\begin{align}
    P_Q(\vec{a}|\vec{x}) = \mathrm{Tr}\!\left(
\rho \; \bigotimes_{j=0}^{n-1} \Pi_{a_j}(x_j)
\right),
\label{eq:QuantumProb}
\end{align}
where $\rho$ is the quantum state and $\Pi_{a_j}(x_j)$ are the local projection operators associated with measurement outcome $a_j$ at the local subsystem $j$.
If the quantum state $\rho$ was separable, the correlations would be classical. 
However, for some entangled states, correlations which cannot be written in the form of Eq. (\ref{eq:LocalDistrApp}) can be generated.

For a fixed input set $\mathcal{X}$ and a fixed output set $\mathcal{A}$, let us denote the set of classical correlations as $\mathcal{C}$ and the set of quantum correlations as $\mathcal{Q}$.
As the set of quantum correlations is strictly larger, we have the inclusion $\mathcal{C} \subsetneq \mathcal{Q}$~\cite{brunner_bell_2014}.
As $\mathcal{C}$ is a polytope, cf. Sec. \ref{sec:Polyhedron}, it can equivalently be characterized by a set of half spaces of the form 
\begin{equation}
   \beta_i \leq \sum_{\vec{a},\vec{x}} (\alpha_{\vec{a},\vec{x}})_i P(\vec{a}|\vec{x}),
   \label{eq:bellHplane}
\end{equation}
where $P$ is a conditional probability distribution over $\vec{a} = (a_0, ...,a_{n-1})$ given $\vec{x} = (x_0, ...,x_{n-1})$, $(\alpha_{\vec{a},\vec{x}})_i \in \mathbbm{R}$ are the Bell coefficients, $\beta_i$ is the classical bound and the index $i$ labels the various half space separations.
If a probability distribution achieves a value smaller than $\beta_i$ for some $i$, i.e. $\beta_i > \sum_{\vec{a},\vec{x}}(\alpha_{\vec{a},\vec{x}})_i P(\vec{a}|\vec{x})$, then the probability distribution $P(\vec{a}|\vec{x})$ cannot be of the form of Eq. \eqref{eq:LocalDistrApp}.
The inequalities of \eqref{eq:bellHplane}, also known as Bell inequalities, can thus be used to separate classical distributions from the quantum ones. 
The associated functional $\beta(P) \coloneqq \sum_{\vec{a},\vec{x}}(\alpha_{\vec{a},\vec{x}})_i P(\vec{a}|\vec{x})$ is called the Bell functional.

We define the expected value of a single random variable as 
\begin{equation}
\langle A_i(x_i) \rangle := \sum_{a_i \in \mathcal{A}} a_i \, P(a_i|x_i),
\end{equation}
and more generally, the $K$-body correlator as
\begin{equation}
\langle A_{i_1}(x_{i_1})\ldots A_{i_K}(x_{i_K})\rangle := \sum_{a_{i_1}, \ldots, a_{i_K} \in \mathcal{A}} a_{i_1}\cdots a_{i_K} \, P(a_{i_1}\ldots a_{i_K}|x_{i_1}\ldots x_{i_K}),
\end{equation}
where by $A_{i_j}(x_{i_j})$ we denote the random variable corresponding to output $a_{i_j}$ given input $x_{i_j}$.
Since we are working with random variables $a_1,\ldots,a_n$, which only take values $\{+1,-1\}$, the $K$-body correlators and the probability distributions $P(\vec{a}|\vec{x})$ are related through the following discrete Fourier transform
\begin{equation}
\label{eq:probs2expectations}
    \langle A_{i_1}(x_{i_1})\ldots A_{i_K}(x_{i_K})\rangle 
    =  \sum_{a_{i_1}, \ldots, a_{i_K}} (-1)^{\sum_{j=1}^{K}a_{i_j}}P(a_{i_1}\ldots a_{i_K}|x_{i_1}\ldots x_{i_K}).
    \nonumber
\end{equation}
As the discrete Fourier transform is invertible, this induces a one-to-one correspondence between the $K$-body correlators and the probability distributions $P(\vec{a}|\vec{x})$.
The number $K$, which denotes the number of random variables  $A_{i_j}(x_{i_j})$ appearing in the expected value, will be referred to as the order of the correlator. 
Instead of expressing the Bell inequality in terms of probabilities, we can thus express it as a linear combination of the correlators. 
In terms of correlators, the Bell functional is of the form 
\begin{equation}
\begin{aligned}
\beta(P) = \sum_{k=0}^{n} \sum_{\substack{i_1,\ldots,i_k}} \sum_{x_{i_1},\ldots,x_{i_k}} \beta_{i_1,\ldots,i_k}^{x_{i_1},\ldots,x_{i_k}} \langle A_{i_1}(x_{i_1}) \cdots A_{i_k}(x_{i_k}) \rangle,
\end{aligned}
\label{eq:BellFuncCorr}
\end{equation}
where $\beta_{i_1,\ldots,i_k}^{x_{i_1},\ldots,x_{i_k}} \in \mathbbm{R}$ are the Bell coefficients in correlator space.
If the correlations come from quantum probability distributions, they are of the form 
\begin{equation}
\langle A_{i_1}(x_{i_1}) \cdots A_{i_k}(x_{i_k}) \rangle = \mathrm{Tr}\!\left( \rho \; \mathbf{M}_{x_{i_1}}^{i_1} \otimes \cdots \otimes \mathbf{M}_{x_{i_k}}^{i_k} \right),
\label{eq:QuantumCorr}
\end{equation}
where $\mathbf{M}_{x_{i_j}}^{i_j}$ is the local measurement operator acting on subsystem $i_j$ with measurement setting $x_{i_j}$.
Eq. \eqref{eq:BellFuncCorr} can thus be written as $\mathrm{Tr}(\rho \mathbf{B})$, where the Bell operator $\mathbf{B}$ is defined as
\begin{equation}
\mathbf{B} \coloneqq \sum_{k=0}^{n} \sum_{\substack{i_1,\ldots,i_k }} \sum_{x_{i_1},\ldots,x_{i_k}} \beta_{i_1,\ldots,i_k}^{x_{i_1},\ldots,x_{i_k}} \; \mathbf{M}_{x_{i_1}}^{i_1} \otimes \cdots \otimes \mathbf{M}_{x_{i_k}}^{i_k}.
\label{eq:BellOpDef}
\end{equation}
In the case of PI Bell operators, we can write it as 
\begin{equation}
\mathbf{B} = \sum_{k=0}^{n} \sum_{x_{1},\ldots,x_{k}} \alpha_{x_1,\ldots,x_k} \; \mathbf{S}_{x_1,\ldots,x_k},
\label{eq:PIBellOpApp}
\end{equation}
where $\alpha_{x_1,\ldots,x_k} \in \mathbbm{R}$ are the Bell coefficients and $\mathbf{S}_{x_1,\ldots,x_k}$ denotes the $k$-body permutationally invariant measurement operator defined by
\begin{equation}
\mathbf{S}_{x_1,\ldots,x_k} \coloneqq \sum_{\substack{i_1,\ldots,i_k=0 \\ \text{all distinct}}}^{n-1} \mathbf{M}_{x_1}^{i_1} \otimes \cdots \otimes \mathbf{M}_{x_k}^{i_k}.
\label{eq:PIMeasOpApp}
\end{equation}
Since Bell operators are Hermitian operators, they can be interpreted as Hamiltonians and the value of the Bell functional can then be interpreted as the expected value of the corresponding Hamiltonian. 
Moreover, finding the state with the largest gap to the classical bound is thus equivalent to finding the ground state of the Hamiltonian. 

\section{The local polytope}

One way to characterize polytopes is through the vertex-representation, defined as follows:
for a set of points in a $d-$dimensional real Euclidean space $V = \{ \vec{v}_1, \vec{v}_2, \dots, \vec{v}_m \} \subset \mathbb{R}^d$, the polytope $\mathcal{P}$ spanned by $V$ is the convex hull of the set of points in $V$, i.e.
\begin{align}
\mathcal{P} & \coloneqq \left\{ \sum_{i=1}^m \lambda_i \vec{v}_i \ \middle|\ 
      \vec{v}_i \in V ,\ \lambda_i \ge 0,\ \sum_{i=1}^m \lambda_i = 1 \right\}. \nonumber
\end{align}
Alternatively, a polytope can be characterized by the set of bounding hyperplanes, known as the half-space representation, as follows:
\begin{align}
\mathcal{P} &= \bigcap_{j=1}^k \left\{ x \in \mathbb{R}^d \ \middle|\ a_j^\top x \le b_j \right\}. \nonumber
\end{align}
It has been proven that the vertex and hyperplane descriptions are equivalent, and therefore either representation can always be converted into the other~\cite{ziegler_lectures_1995}. 
Let $H_j$ denote the halfspace associated with $a_j^\top x \le b_j$.
The corresponding hyperplane $a_j^\top x = b_j$ is redundant if all points satisfying all other half-spaces $H_i$ automatically satisfy $H_j$.
A hyperplane is called irredundant if it is not redundant. 
The irredundant $(d-1)$-dimensional hyperplanes that bound the polytope are called facets. 
If the dimension of the irredundant hyperplane bounding the polytope is smaller than $d-1$, it is called a face. 

A prominent example of a polytope in the field of quantum information is the local polytope. 
Let us consider $n$ spacelike separated parties labelled by $i \in [n]$, each with a system which takes inputs $x_i \in \mathcal{X}$ and outputs $a_i \in \mathcal{A}$.
The local polytope is the set of all probability distributions of the form 
\begin{equation}
\begin{aligned}
P_C(\vec{a}\mid \vec{x})
&= \int_{\Lambda}d\lambda\; p(\lambda)
   \prod_{i=0}^{n-1} P(a_i \mid x_i,\lambda) ,
\end{aligned}
\label{eq:localprob}
\end{equation}
where $\vec{a}=(a_1,...,a_n)$ and $\vec{x} = (x_1,...,x_n)$ and $p(\lambda)$ being the probability distribution over the hidden variable $\lambda$.
For more details and how it relates to the quantum set, we refer the reader to App. \ref{app:BellScenario}.
We can view the probability distribution $P_C(\vec{a}\mid \vec{x})$ as a vector, where each entry of the vector is the probability for a specific $\vec{a} \in \mathbbm{R}^{|\mathcal{A}|}$ to occur given a specific $\vec{x} \in \mathbbm{R}^{|\mathcal{X}|}$. 
For example, for the bipartite binary-input binary-output scenario, the vector is given by
\begin{equation}
    \begin{aligned}
    \vec{P}= \bigg(&P(00|00),P(01|00),P(10|00),P(11|00),\\
    &P(00|01),P(01|01),...,P(10|11),P(11|11) \bigg).
    \end{aligned}
    \nonumber
\end{equation}
In this view, is clear that the set of correlations of the form Eq. \eqref{eq:localprob} forms a polytope for a fixed $n,\mathcal{A}, \mathcal{X}$.
The vertices of this polytope are the local deterministic probability distributions~\cite{brunner_bell_2014}, i.e. distributions of the form $P(a_i|x_i) = \delta(a_i = f_i(x_i))$ with $f_i: \mathcal{X} \rightarrow \mathcal{A}$.
The facets of the polytope are the hyperplanes that separate all local probability distributions from the nonlocal probability distributions, i.e. probability distribution that do not satisfy Eq. \eqref{eq:localprob}. 
These facets are also known as tight Bell inequalities. 

\section{Matrix elements of PI measurement operators}\label{app:HighToSingle}

Let us start by reminding ourselves that single-body measurement operators are defined as
\begin{equation}
     \mathbf{S}_{x_1} := \sum_{i=0}^{n-1} \mathbf{M}_{x_1}^{i},
     \nonumber
\end{equation}
two-body measurement operators are defined as 
\begin{equation}
     \mathbf{S}_{x_1 x_2}:=\sum_{\substack{i_1,i_2=0\\i_1\neq i_2}}^{n-1} \mathbf{M}_{x_1}^{i_1} \otimes \mathbf{M}_{x_2}^{i_2},
     \nonumber
\end{equation}
and three-body measurement operators are defined as
\begin{equation}
     \mathbf{S}_{x_1 x_2 x_3}:=\sum_{\substack{i_1,i_2,i_3=0\\ \text{all distinct} }}^{n-1} \mathbf{M}_{x_1}^{i_1} \otimes \mathbf{M}_{x_2}^{i_2} \otimes \mathbf{M}_{x_3}^{i_3},
     \nonumber
\end{equation}
where
\begin{equation}
\mathbf{M}_0=\cos(\varphi) Z^{i}+ \sin(\varphi) X^{i}, \qquad \mathbf{M}_1=\cos(\theta) Z^{i}+ \sin(\theta) X^{i}.
\nonumber
\end{equation}
Here $X,Y,Z$ denote the Pauli X, Pauli Y and Pauli Z matrices. 
In general, any $K$-body PI Bell operator can be written as a linear combination of $K$-body PI Pauli operators, defined as 
\begin{align}
   \mathbf{S}_{P_1 P_2 \cdots P_K} \coloneq \sum_{\substack{i_1,...,i_K=0\\ \text{all distinct} }}^{n-1} P_1^{i_1} \otimes P_2^{i_2} \otimes \cdots \otimes P_K^{i_K},
   \label{eq:Paulicorrelator}
\end{align}
where $P_j \in \{X,Y,Z\}$.
To see the above, let us start by noting that the single-body permutationally invariant Bell operators can be written as 
\begin{align}
\mathbf{S}_0 &= \cos\varphi \, \mathbf{S}_Z + \sin\varphi \, \mathbf{S}_X, \nonumber \\
\mathbf{S}_1 &= \cos\theta \, \mathbf{S}_Z + \sin\theta \, \mathbf{S}_X, \nonumber
\end{align}
where
\begin{align}
\mathbf{S}_Z &= \sum_{i=0}^{n-1} Z^i, \nonumber \\
\mathbf{S}_X &= \sum_{i=0}^{n-1} X^i. \nonumber
\end{align}
The two-body permutationally invariant Bell operators can be written as 
\begin{align}
\mathbf{S}_{00} &= \cos^2\varphi \, \mathbf{S}_{ZZ} + 2 \cos\varphi \sin\varphi \, \mathbf{S}_{ZX} + \sin^2\varphi \, \mathbf{S}_{XX}, \nonumber\\
\mathbf{S}_{01} &= \cos\varphi \cos\theta \, \mathbf{S}_{ZZ} + (\cos\varphi \sin\theta + \sin\varphi \cos\theta) \mathbf{S}_{ZX} + \sin\varphi \sin\theta \, \mathbf{S}_{XX}, \nonumber\\
\mathbf{S}_{11} &= \cos^2\theta \, \mathbf{S}_{ZZ} + 2 \cos\theta \sin\theta \, \mathbf{S}_{ZX} + \sin^2\theta \, \mathbf{S}_{XX}. \nonumber
\end{align}
and the three-body permutationally invariant Bell operators can be written as 
\begin{align}
\mathbf{S}_{000} &= \cos^3\varphi \, \mathbf{S}_{ZZZ} + 3 \cos^2\varphi \sin\varphi \, \mathbf{S}_{ZZX} + 3 \cos\varphi \sin^2\varphi \, \mathbf{S}_{ZXX} + \sin^3\varphi \, \mathbf{S}_{XXX}, 
\nonumber\\
\mathbf{S}_{001} &= \cos^2\varphi \cos\theta \, \mathbf{S}_{ZZZ} + \left( 2 \cos\theta \cos\varphi \sin\varphi + \cos^2\varphi \sin\theta \right) \mathbf{S}_{ZZX} \nonumber\\
&\quad+ \left( 2 \sin\theta \sin\varphi \cos\varphi + \sin^2\varphi \cos\theta \right) \mathbf{S}_{ZXX} + \sin^2\varphi \sin\theta \, \mathbf{S}_{XXX},
\nonumber\\
\mathbf{S}_{011} &= \cos^2\theta \cos\varphi \, \mathbf{S}_{ZZZ} + \left( 2 \cos\varphi \cos\theta \sin\theta + \cos^2\theta \sin\varphi \right) \mathbf{S}_{ZZX} \nonumber\\
&\quad+ \left( 2 \sin\varphi \sin\theta \cos\theta + \sin^2\theta \cos\varphi \right) \mathbf{S}_{ZXX} + \sin^2\theta \sin\varphi \, \mathbf{S}_{XXX},
\nonumber\\
\mathbf{S}_{111} &= \cos^3\theta \, \mathbf{S}_{ZZZ} + 3 \cos^2\theta \sin\theta \, \mathbf{S}_{ZZX} + 3 \cos\theta \sin^2\theta \, \mathbf{S}_{ZXX} + \sin^3\theta \, \mathbf{S}_{XXX}. \nonumber
\end{align}
We note that for the PI Pauli operators of Eq. \eqref{eq:Paulicorrelator}, only the number of distinct Pauli operators $X,Y,Z$ is relevant and not the order.
For example $\mathbf{S}_{ZXX} = \mathbf{S}_{XZX} = \mathbf{S}_{XXZ}$.
The two-body PI Pauli operators can be written in terms of the single-body PI Pauli operators using the the following relations:
\begin{align}
    \mathbf{S}_{ZZ} &= (\mathbf{S}_Z)^2 -n \mathbbm{I}, \nonumber\\
    \mathbf{S}_{ZX} &= \mathbf{S}_Z \mathbf{S}_X - i \mathbf{S}_Y, \nonumber\\
    \mathbf{S}_{XX} &= (\mathbf{S}_X)^2 -n \mathbbm{I}. \nonumber
\end{align}
Similarly, the three-body PI Pauli operators can be written in terms of the single-body PI Pauli operators using the following relations:
\begin{align}
    \mathbf{S}_{ZZZ} &= (\mathbf{S}_Z)^3 + (2-3n)\mathbf{S}_Z ,\nonumber\\
    \mathbf{S}_{ZZX} &= (\mathbf{S}_Z)^2 \mathbf{S}_X - n \mathbf{S}_X- i \mathbf{S}_{Y} \mathbf{S}_Z - i \mathbf{S}_{Z} \mathbf{S}_{Y} ,\nonumber\\
    \mathbf{S}_{ZXX} &= \mathbf{S}_Z (\mathbf{S}_X)^2 +(2-n) \mathbf{S}_Z - 2i \mathbf{S}_Y \mathbf{S}_X ,\nonumber\\
    \mathbf{S}_{XXX} &= (\mathbf{S}_X)^3 + (2-3n)\mathbf{S}_X .\nonumber
\end{align}
The proof for the above relations can be found in App. \ref{app:ThreeToOnePIPauliProof}.
The matrix elements of the single-body PI Pauli operators are given by~\cite{tura_nonlocality_2015}.
\begin{align}
(\mathbf{S}_X)_{k,l} &= \Gamma(l)\delta_{k,l+1} + \Gamma(k)\delta_{l,k+1}, \nonumber \\ 
(\mathbf{S}_Y)_{k,l} &= i\Gamma(l)\delta_{k,l+1} - i \Gamma(k)\delta_{l,k+1}, \nonumber\\
(\mathbf{S}_Z)_{k,l} &= \Xi(k) \delta_{k,l}, \nonumber
\end{align}
where $\Gamma(x) = \sqrt{(x+1)(2J-x)}$ and $\Xi(x) = (2J-2x)$.
The matrix elements of products of two-body PI Pauli operators are
\begin{align}
    (\mathbf{S}_X \mathbf{S}_X)_{k,l} &=  \Gamma(l+1) \Gamma(l) \delta_{k,l+2} + (\Gamma(l)^2 + \Gamma(l-1)^2) \delta_{k,l} + \Gamma(k+1) \Gamma(k) \delta_{l,k+2},
    \nonumber\\
    (\mathbf{S}_Y \mathbf{S}_Y)_{k,l} &=  - \Gamma(l+1) \Gamma(l) \delta_{k,l+2} + (\Gamma(l)^2 + \Gamma(l-1)^2) \delta_{k,l} - \Gamma(k+1) \Gamma(k) \delta_{l,k+2},
    \nonumber\\
    (\mathbf{S}_X \mathbf{S}_Y)_{k,l} &= i \Gamma(l+1) \Gamma(l) \delta_{k,l+2} +i (\Gamma(l)^2 - \Gamma(l-1)^2) \delta_{k,l} - i \Gamma(k+1) \Gamma(k) \delta_{l,k+2},
    \nonumber\\
    (\mathbf{S}_Y \mathbf{S}_X)_{k,l} &= i \Gamma(l+1) \Gamma(l) \delta_{k,l+2} + i (\Gamma(l-1)^2 - \Gamma(l)^2) \delta_{k,l} - i \Gamma(k+1) \Gamma(k) \delta_{l,k+2}, \nonumber\\
   (\mathbf{S}_X \mathbf{S}_Z)_{k,l} &= \Xi(l)\Gamma(l) \delta_{k,l+1} + \Xi(l) \Gamma(k) \delta_{l,k+1},  \nonumber\\
   (\mathbf{S}_Z \mathbf{S}_X)_{k,l} &= \Xi(k)\Gamma(l) \delta_{k,l+1} + \Xi(k) \Gamma(k) \delta_{l,k+1}, \nonumber\\
   (\mathbf{S}_Y \mathbf{S}_Z)_{k,l} &= i \Xi(l)\Gamma(l) \delta_{k,l+1} -i \Xi(l) \Gamma(k) \delta_{l,k+1}, \nonumber\\
   (\mathbf{S}_Z \mathbf{S}_Y)_{k,l} &= i \Xi(k)\Gamma(l) \delta_{k,l+1} -i \Xi(k) \Gamma(k) \delta_{l,k+1}, \nonumber\\
   (\mathbf{S}_Z \mathbf{S}_Z)_{k,l} &= \Xi(k)^2 \delta_{k,l}.\nonumber
\end{align}
The matrix elements of products of three-body PI Pauli operators are
\begin{align}
    ((\mathbf{S}_Z)^3)_{k,l} =& \Xi(k)^3 \delta_{k,l}, \nonumber\\
    ((\mathbf{S}_Z)^2 \mathbf{S}_X)_{k,l} =& \Xi(k)^2 \Gamma(l) \delta_{k,l+1} + \Xi(k)^2 \Gamma(k) \delta_{l,k+1}, \nonumber\\
    (\mathbf{S}_Z (\mathbf{S}_X)^2)_{k,l} =& \Xi(k) \Gamma(l+1) \Gamma(l) \delta_{k,l+2} + \Xi(k) (\Gamma(l)^2 + \Gamma(l-1)^2) \delta_{k,l} + \Xi(k) \Gamma(k+1) \Gamma(k) \delta_{l,k+2}, \nonumber\\
    ((\mathbf{S}_X)^3)_{k,l} =& \Gamma(l+2) \Gamma(l+1) \Gamma(l) \delta_{k,l+3} + \left( \Gamma(l+1)^2 \Gamma(l) + \Gamma(l)^3 + \Gamma(l) \Gamma(l-1)^2 \right) \delta_{k,l+1} \nonumber\\
    & + \left( \Gamma(k+1)^2 \Gamma(k) + \Gamma(k)^3 + \Gamma(k) \Gamma(k-1)^2 \right) \delta_{l,k+1} + \Gamma(k+2) \Gamma(k+1) \Gamma(k) \delta_{l,k+3}, \nonumber\\
    (\mathbf{S}_X \mathbf{S}_Z^2)_{k,l} =& \Xi(l)^2\Gamma(l) \delta_{k,l+1} + \Xi(l)^2 \Gamma(k) \delta_{l,k+1},  \nonumber\\
    (\mathbf{S}_Y \mathbf{S}_Z^2)_{k,l} =& i \Xi(l)^2\Gamma(l) \delta_{k,l+1} -i \Xi(l)^2 \Gamma(k) \delta_{l,k+1}, \nonumber\\
     (\mathbf{S}_Y \mathbf{S}_Z \mathbf{S}_X)_{k,l} =& i \Xi(l+1) \Gamma(l+1) \Gamma(l) \delta_{k,l+2} +i\left(\Xi(l-1) \Gamma(l-1)^2-\Xi(l+1) \Gamma(l)^2\right)\delta_{k,l} \nonumber\\
     &-i \Xi(k+1) \Gamma(k+1) \Gamma(k) \delta_{l,k+2}, \nonumber\\
     (\mathbf{S}_Z \mathbf{S}_Y \mathbf{S}_X)_{k,l} =& i \Xi(k) \Gamma(l+1) \Gamma(l) \delta_{k,l+2} + i \Xi(k) (\Gamma(l-1)^2 - \Gamma(l)^2) \delta_{k,l} \nonumber\\
     &- i \Xi(k) \Gamma(k+1) \Gamma(k) \delta_{l,k+2}, \nonumber\\
      (\mathbf{S}_Y \mathbf{S}_X \mathbf{S}_X)_{k,l} =& i \Gamma(l+2) \Gamma(l+1) \Gamma(l) \delta_{k,l+3} + i (\Gamma(l)^3 - \Gamma(l+1)^2 \Gamma(l) + \Gamma(l)\Gamma(l-1)^2) \delta_{k,l+1}, \nonumber\\
      & - i (\Gamma(k)^3 + \Gamma(k+1)^2 \Gamma(k) - \Gamma(k)\Gamma(k-1)^2) \delta_{l,k+1} - i \Gamma(k+2) \Gamma(k+1) \Gamma(k) \delta_{l,k+3}.\nonumber
\end{align}
Using this, we can now calculate the nonzero matrix elements of the PI measurement operators. 
Since the elements on the of diagonals are real and since Bell operators must be Hermitian, the Permutationally invariant Bell operators are symmetric operators. 
Therefore, we only need to focus on the diagonal elements $(k,l=k+d)$, where $d$ is some positive integer.
For the two-body measurement operators, the elements for the first off-diagonals are 
\begin{align}
\mathbf{S}_0^{k,k+1} &= \Gamma(k) \sin (\phi),\nonumber\\ 
\mathbf{S}_1^{k,k+1} &= \Gamma(k) \sin (\theta), \nonumber\\ 
\mathbf{S}_{00}^{k,k+1} &=\Gamma(k) (-2 k+n-1) \sin (2 \phi ), \nonumber\\ 
\mathbf{S}_{01}^{k,k+1} &= \Gamma(k) (-2 k+n-1) \sin ( \phi + \theta ), \nonumber\\ 
\mathbf{S}_{11}^{k,k+1} &= \Gamma(k) (-2 k+n-1) \sin (2 \theta ), \nonumber
\end{align}
and the elements of the second off-diagonals are
\begin{align}
\mathbf{S}_{00}^{k,k+2} &= \Gamma(k) \Gamma(k+1)  \sin ^2(\phi ), \nonumber\\ 
\mathbf{S}_{01}^{k,k+2} &=  \Gamma(k)  \Gamma(k+1) \sin (\theta )  \sin (\phi ), \nonumber\\ 
\mathbf{S}_{11}^{k,k+2} &= \Gamma(k) \Gamma(k+1) \sin ^2(\theta ). \nonumber
\end{align}
All other off-diagonals are $0$.
For the three-body measurement operators, the elements of the first off-diagonals are
\begin{align}
\mathbf{S}_{000}^{k,k+1} = &3 \Gamma(k) \sin (\phi ) \left(\left(4 k^2-4 k (n-1)+n^2-3 n+2\right) \cos ^2(\phi )+k (-k+n-1) \sin ^2(\phi )\right),\nonumber\\ 
\mathbf{S}_{001}^{k,k+1} =& \Gamma(k) \bigg[\left(4 k^2-4 k (n-1)+n^2-3 n+2\right) \cos (\phi ) \left(\sin (\theta ) \cos (\phi )+2 \cos (\theta ) \sin (\phi ) \right) \nonumber \\
& - 3 k \sin (\theta ) (k-n+1) \sin ^2(\phi )\bigg],\nonumber\\ 
\mathbf{S}_{011}^{k,k+1} =& \Gamma(k) \bigg[\cos (\theta ) \left(4 k^2-4 k (n-1)+n^2-3 n+2\right) \left(2 \sin (\theta ) \cos (\phi )+\cos (\theta ) \sin (\phi )\right) \nonumber \\
&-3 k \sin ^2(\theta ) (k-n+1) \sin (\phi )\bigg],\nonumber\\ 
\mathbf{S}_{111}^{k,k+1} =& 3 \sin (\theta ) \Gamma(k) \left(\cos ^2(\theta ) \left(4 k^2-4 k (n-1)+n^2-3 n+2\right)+k \sin ^2(\theta ) (-k+n-1)\right),\nonumber
\end{align}
the elements of the second off-diagonals are
\begin{align}
\mathbf{S}_{000}^{k,k+2} &= 3 \Gamma(k+1)  \Gamma(k)  (-2 k+n-2) \sin ^2(\phi ) \cos (\phi ),\nonumber\\ 
\mathbf{S}_{001}^{k,k+2} &= \Gamma(k+1)  \Gamma(k) (-2 k+n-2) \sin (\phi ) (2 \sin (\theta ) \cos (\phi )+\cos (\theta ) \sin (\phi )),\nonumber\\ 
\mathbf{S}_{011}^{k,k+2} &= \sin (\theta ) \Gamma(k+1)  \Gamma(k) (-2 k+n-2) (\sin (\theta ) \cos (\phi )+2 \cos (\theta ) \sin (\phi )),\nonumber\\ 
\mathbf{S}_{111}^{k,k+2} &= 3 \sin ^2(\theta ) \cos (\theta ) \Gamma(k+1)  \Gamma(k) (-2 k+n-2),\nonumber
\end{align}
and the elements of the third off-diagonals are
\begin{align}
\mathbf{S}_{000}^{k,k+3} &= \Gamma(k+2) \Gamma(k+1)  \Gamma(k) \sin ^3(\phi ),\nonumber\\ 
\mathbf{S}_{001}^{k,k+3} &= \sin (\theta ) \Gamma(k+2) \Gamma(k+1)  \Gamma(k) \sin ^2(\phi ),\nonumber\\ 
\mathbf{S}_{011}^{k,k+3} &= \sin ^2(\theta ) \Gamma(k+2) \Gamma(k+1)  \Gamma(k) \sin (\phi ),\nonumber\\ 
\mathbf{S}_{111}^{k,k+3} &= \sin ^3(\theta ) \Gamma(k+2) \Gamma(k+1)  \Gamma(k).\nonumber
\end{align}

\section{Higher to single-body PI Pauli operators identities}

To calculate the matrix elements of the higher order PI Pauli operators, it is convenient to write them in terms of product of single body PI Pauli operators as demonstrated in App. \ref{app:HighToSingle}.
The decompositions of two-body and three-body PI Pauli operators in terms of single-body Pauli operators is given in the following subsections. 

\subsection{Two-body to single-body PI Pauli operator proof \label{app:TwoToOnePIPauliProof}}
\begin{align}
    \mathbf{S}_{ZZ} = \sum_{i \neq j} Z^i Z^j =  \sum_{i, j} Z^i Z^j - \sum_{i} Z^i Z^i = \sum_{i, j} Z^i Z^j - \sum_i \mathbbm{I} = \mathbf{S}_Z \mathbf{S}_Z - n \mathbbm{I} \nonumber\\
    \mathbf{S}_{ZX} = \sum_{i \neq j} Z^i X^j =  \sum_{i, j} Z^i X^j - \sum_{i} Z^i X^i = \sum_{i, j} Z^i X^j - i \sum_i Y = \mathbf{S}_Z \mathbf{S}_X - \mathbf{S}_Y \nonumber\\
    \mathbf{S}_{XX} = \sum_{i \neq j} X^i X^j =  \sum_{i, j} X^i X^j - \sum_{i} X^i X^i = \sum_{i, j} X^i X^j - \sum_i \mathbbm{I} = \mathbf{S}_X \mathbf{S}_X - n \mathbbm{I} \nonumber
\end{align}

\subsection{Three-body to single-body PI Pauli operator proof \label{app:ThreeToOnePIPauliProof}}

Let us first define 

\begin{align}
    \mathbf{S}'_{P_1 P_2 \cdots P_k} \coloneq \sum_{i} P_1^{i}  P_2^{i}  \cdots  P_k^{i}, \nonumber
\end{align}
where for each term in the sum, every operator in the Pauli string acts on the same qubit. 
The proof for how to rewrite the three-body Pauli-operators in terms of only single-body Pauli-operators is as follows

\begin{align}
    \mathbf{S}_{P_1 P_2 P_3} &= \sum_{|\{i,j,k\}|=3} P_1^{i}  P_2^{j}  P_3^{k} \nonumber\\
    &= \sum_{|\{i,j,k\}|=3} P_1^{i}  P_2^{j}  P_3^{k} + \sum_{i \neq j, k=i,j} P_1^{i}  P_2^{j}  P_3^{k}
    - \sum_{i \neq j, k=i,j} P_1^{i} P_2^{j}  P_3^{k} \nonumber
\end{align}
Define
\begin{align}
    \omega &= \sum_{i \neq j, k=i,j} P_1^{i}  P_2^{j}  P_3^{k} = \sum_{i \neq j} P_1^{i}  P_2^{j}  P_3^{i} + P_1^{i}  P_2^{j}  P_3^{j} \nonumber  
\end{align}
Let the first term in $\omega$ be $\omega_1$ and the second term be $\omega_2$.
\begin{align}
    \omega_1 &= \sum_{i \neq j} P_1^{i}  P_2^{j}  P_3^{i} + \sum_{i} P_1^{i}  P_3^{i}  P_2^{i} - \sum_{i} P_1^{i}  P_3^{i}  P_2^{i} \nonumber\\
    &= \sum_{i} P_1^{i}  P_3^{i}  \sum_{j} P_2^{j} - \sum_{i} P_1^{i}  P_3^{i} P_2^{i} \nonumber\\
    &= \mathbf{S}'_{P_1 P_3} S_{P_2} - \mathbf{S}'_{P_1 P_3 P_2} \nonumber
\end{align}

\begin{align}
    \omega_2 &= \sum_{i \neq j} P_1^{i}  P_2^{j}  P_3^{j} + \sum_{i} P_1^{i}  P_2^{i}  P_3^{i} - \sum_{i} P_1^{i}  P_2^{i}  P_3^{i} \nonumber\\
    &= \sum_{i} P_1^{i}  \sum_{j} P_2^{j}  P_3^{j} - \sum_{i} P_1^{i}  P_2^{i} P_3^{i} \nonumber\\
    &= \mathbf{S}_{P_1} \mathbf{S}'_{P_2 P_3} - \mathbf{S}'_{P_1 P_2 P_3} \nonumber
\end{align}

\begin{align}
    \mathbf{S}_{P_1 P_2 P_3} + \omega &=  \sum_{i \neq j, k} P_1^{i}  P_2^{j}  P_3^{k} \nonumber\\
    &=  \sum_{i \neq j, k} P_1^{i}  P_2^{j}  P_3^{k} +  \sum_{i} P_1^{i}  P_2^{i} \sum_k  P_3^{k} - \sum_{i} P_1^{i}  P_2^{i} \sum_k  P_3^{k}\nonumber \\ 
    &= \sum_{i, j, k} P_1^{i}  P_2^{j}  P_3^{k} - \sum_{i} P_1^{i}  P_2^{i} \sum_k  P_3^{k} \nonumber\\
    &= \mathbf{S}_{P_1} \mathbf{S}_{P_2} \mathbf{S}_{P_3} - \mathbf{S}'_{P_1 P_2} \mathbf{S}_{P_3} \nonumber
\end{align}

Combining all of the above, we thus get that the three-body Pauli-operators thus takes the following form in terms of single-body Pauli-operators.

\begin{align}
    \mathbf{S}_{P_1 P_2 P_3} = \mathbf{S}_{P_1} \mathbf{S}_{P_2} \mathbf{S}_{P_3} - \mathbf{S}'_{P_1 P_2} \mathbf{S}_{P_3} - \mathbf{S}'_{P_1 P_3} \mathbf{S}_{P_2} - \mathbf{S}_{P_1} \mathbf{S}'_{P_2 P_3}  + \mathbf{S}'_{P_1 P_2 P_3} + \mathbf{S}'_{P_1 P_3 P_2} \nonumber
\end{align}

\section{Two-body Hyperplanes}\label{app:TwoBodyHPlanes}

We remind ourselves that 
$\mathbf{B} = \alpha_0 \mathbf{S}_{0} + \alpha_1 \mathbf{S}_{1} + \alpha_2 \mathbf{S}_{00} + \alpha_3 \mathbf{S}_{01} + \alpha_4 \mathbf{S}_{11}$.
The stoquasticity conditions on the first off-diagonals $\mathbf{B}_{k,k+1} \leq 0$ after dividing out the common positive prefactor $\Gamma(k)$ is given by
\[
\alpha_0 \sin (\phi) + \alpha_1 \sin (\theta) + \alpha_2 (-2 k+n-1) \sin (2 \phi ) + \alpha_3 (-2 k+n-1) \sin ( \phi + \theta ) + \alpha_4 (-2 k+n-1) \sin (2 \theta ) \leq 0.\nonumber
\] 
The last three terms depend on $k$, with $k \in \{0,\ldots,n-1\}$. 
The hyperplane at $k=0$ is given by
\begin{equation}
\label{eq:MinKHyperplane}
\alpha_0 \sin (\phi) + \alpha_1 \sin (\theta) + \alpha_2 (n-1) \sin (2 \phi ) + \alpha_3 (n-1) \sin ( \phi + \theta ) + \alpha_4 (n-1) \sin (2 \theta ) \leq 0,
\end{equation}
and the hyperplane at $k=n-1$ is given by
\begin{equation}
\label{eq:MaxKHyperplane}
\alpha_0 \sin (\phi) + \alpha_1 \sin (\theta) + \alpha_2 (1-n) \sin (2 \phi ) + \alpha_3 (1-n) \sin ( \phi + \theta ) + \alpha_4 (1-n) \sin (2 \theta ) \leq 0.
\end{equation}
We notice that all other hyperplanes can be written as conical combinations of the hyperplanes of Eq. \eqref{eq:MinKHyperplane} and Eq. \eqref{eq:MaxKHyperplane}.
Therefore, the hyperplanes at $k=0$ and $k=n-1$ are the only irredundant ones from the family of first off-diagonal conditions.
This gives the first two inequalities of Ineq. \eqref{eq:TwoBodyHyperplanes} in Section \ref{sec:2BodyRes}.
The inequalities corresponding to the second off-diagonal is of the form 
\[
 \alpha_2 \sin ^2(\phi )+ \alpha_3 \sin ( \phi) \sin( \theta ) + \alpha_4 \sin ^2(\theta ) \leq 0,\nonumber
\] 
after dividing out the common prefactors $\Gamma(k) \Gamma(k+1)$.
This results in only one stoquastic condition of the form 
\[
\alpha_2 s^2_2+ \alpha_3 s^2_3+ \alpha_4 s^2_4 \leq 0, \nonumber
\]
giving the last inequality of Ineq. \eqref{eq:TwoBodyHyperplanes} in Section \ref{sec:2BodyRes}. 

\section{Two-body Polyhedron}\label{app:TwoBodyPolyhedron}

As shown in App. \ref{app:TwoBodyHPlanes}, the polyhedron of the two-body Bell operator $\mathbbm{S}^{n,2}_{\varphi,\theta}$ is characterized by the following inequalities
\begin{equation}
\begin{aligned}
H_1 :& \quad \alpha_0 \sin \phi + \alpha_1 \sin \theta + \alpha_2 (n-1) \sin (2\phi)
+ \alpha_3 (n-1) \sin (\phi+\theta) + \alpha_4 (n-1) \sin (2\theta) \le 0,\\
H_2 :& \quad \alpha_0 \sin \phi + \alpha_1 \sin \theta + \alpha_2 (1-n) \sin (2\phi)
+ \alpha_3 (1-n) \sin (\phi+\theta) + \alpha_4 (1-n) \sin (2\theta) \le 0,\\
H_3 :& \quad \alpha_2 \sin^2 \phi + \alpha_3 \sin \phi \sin \theta + \alpha_4 \sin^2 \theta \le 0.
\label{eq:TwoBodyPolEq}
\end{aligned}
\end{equation}
To see that these three inequalities are irredundant, let us consider the following points
\begin{equation}
\begin{aligned}
    \vec{p}_1 &= (0,0,1,0,\csc (\theta ) \sec (\theta ) \sin (\phi ) (-\cos (\phi ))), \nonumber \\
    \vec{p}_2 &= (-2 (n-1) \csc (\theta ) \sin (\theta -\phi ),0,1,0,-\csc ^2(\theta ) \sin ^2(\phi )), \nonumber \\
    \vec{p}_3 &= (2 (n-1) \csc (\theta ) \sin (\theta -\phi ),0,1,0,-\csc ^2(\theta ) \sin ^2(\phi ) ), \nonumber
\end{aligned}
\end{equation}
and let us define the matrix 
\[
\mathbf{H} =\begin{pmatrix}
\sin\phi & \sin\theta & (n-1)\sin(2\phi) & (n-1)\sin(\phi+\theta) & (n-1)\sin(2\theta)  \\
\sin\phi & \sin\theta & (1-n)\sin(2\phi) & (1-n)\sin(\phi+\theta) & (1-n)\sin(2\theta) \\
0 & 0 & \sin^2\phi & \sin\phi\,\sin\theta & \sin^2\theta 
\end{pmatrix},
\]
such that $ \mathbf{H} \vec{\alpha} \leq 0$ gives us the system of equations of Eq. \eqref{eq:TwoBodyPolEq}, where $\vec{\alpha} = (\alpha_0,\alpha_1,\alpha_2,\alpha_3,\alpha_4)$.
We then have the following:
\begin{equation}
\begin{aligned}
    \mathbf{H} \vec{p}_1 &= (0,0,\sin (\phi ) (\sin (\phi )-\tan (\theta ) \cos (\phi ))^T, \nonumber \\
    \mathbf{H} \vec{p}_2 &= (0,-4 (n-1) \csc (\theta ) \sin (\phi ) \sin (\theta -\phi ),0)^T, \nonumber \\
    \mathbf{H} \vec{p}_3 &= (4 (n-1) \csc (\theta ) \sin (\phi ) \sin (\theta -\phi ),0,0)^T. \nonumber
\end{aligned}
\end{equation}
Thus for each point, there is only one equality which is non zero. 
The inequality can then be chosen to be violated by choosing either $+\vec{p}_i$ or $-\vec{p}_i$ where $i \in \{1,2,3\}$, such that the nonzero term is positive. 

The lines are the basis vectors of the kernel of $\mathbf{H}$.
Solving $\mathbf{H} \vec{\alpha} = 0$, we obtain the following expressions for the lines:
\begin{equation}
\begin{aligned}
    \vec{l}_1 &= (-\csc(\phi)\,\sin(\theta) , 1 , 0 , 0 , 0), \nonumber\\
    \vec{l}_2 &= (0,0,\sin ^2(\theta ) \csc ^2(\phi ),-2 \sin (\theta ) \csc (\phi ),1). \nonumber
\end{aligned}
\end{equation}
The dimension $d$ of the pointed cone, i.e. after factoring out the lineality space spanned by lines, is $\text{Rank}(\mathbf{H})=3$.
In a $d-$dimensional cone, every extreme ray arises as the intersection of exactly $d-1$ linearly independent hyperplanes. 
Therefore, the total number of extreme rays is bounded from above by $\binom{m}{d-1}$, where $m$ denotes the number of irredundant hyperplanes.
This bound follows because each choice of $d-1$ hyperplanes can intersect in at most one $1-$dimensional face, and not every such intersection necessarily lies in the cone.
Hence $\binom{m}{d-1}$ is a strict upperbound in general. 
For our case, as $m=3$ and $d=3$, the upperbound on the number of rays is $\binom{3}{2}=3$.
To find explicit expressions of the rays, we solve 
the system of equations
\begin{equation}
\label{eq:TwoBodyRaysSystems}
\begin{pmatrix}
\vec{h}_i^T\\
\vec{h}_j^T\\
\vec{l}_1^T \\
\vec{l}_2^T 
\end{pmatrix}
\vec{\alpha} = \vec{0},
\end{equation}
where $h_i^T$ and $h_j^T$ with $i,j \in \{1,2,3\}$ and $i \neq j$ are the $i$-th and $j$-th row of the matrix $\mathbf{H}$.
The conditions $\vec{l}_u \cdot \vec{\alpha} = 0$, with $u \in \{1,2\}$, are added as we are working directly in $\mathbbm{R}^5$, instead of projecting down to the pointed cone first. 
Since adding linear combination of lines to any ray $r_i$, i.e. $r_i +c_1 l_1 + c_2 l_2$ with $c_1,c_2 \in \mathbbm{R}$, is still a valid ray, we require $c_1,c_2=0$ to remove this extra degree of freedom. 
We note that strictly speaking, instead of requiring $l_i \vec{\alpha} = 0$, we could have required $l_i \vec{\alpha} = c_i$ with $c_i \in \mathbbm{R}$. 
This would not have affected our analysis however. 
Solving Eq. \eqref{eq:TwoBodyRaysSystems}, we obtain the following rays
\[
r_1 =
\left(
\begin{array}{ccccc}
 -\frac{(n-1) u_1 \csc (\theta ) (24 \sin (\theta -\phi )+4 \sin (3 (\theta -\phi ))-12 \sin (3 \theta -\phi )+\sin (5 \theta -\phi )+3 \sin (3 \theta +\phi )-3 \sin (\theta +3 \phi )-12 \sin (\theta -3 \phi )+\sin (\theta -5 \phi ))}{4 (\cos (2 \theta )+\cos (2 \phi )-2) (2 \cos (2 \theta )+\cos (2 \phi )-3)} \\
 -\frac{(n-1) u_1 \csc ^5(\phi ) (24 \sin (\theta -\phi )+4 \sin (3 (\theta -\phi ))-12 \sin (3 \theta -\phi )+\sin (5 \theta -\phi )+3 \sin (3 \theta +\phi )-3 \sin (\theta +3 \phi )-12 \sin (\theta -3 \phi )+\sin (\theta -5 \phi ))}{16 \left(2 \sin ^4(\theta ) \csc ^4(\phi )+3 \sin ^2(\theta ) \csc ^2(\phi )+1\right)} \\
 u_1 \\
 -\frac{u_1 \sin (\theta ) \sin (\phi ) \left(\csc ^4(\theta )-\csc ^4(\phi )\right)}{\csc ^2(\theta )+2 \csc ^2(\phi )} \\
 -\frac{u_1 \left(\sin ^2(\theta ) \csc ^2(\phi )+2\right)}{2 \sin ^2(\theta ) \csc ^2(\phi )+1} \\
\end{array}
\right),
\]

\[
r_2 =
\left(
\begin{array}{ccccc}
 \frac{(n-1) u_2 \csc (\theta ) (24 \sin (\theta -\phi )+4 \sin (3 (\theta -\phi ))-12 \sin (3 \theta -\phi )+\sin (5 \theta -\phi )+3 \sin (3 \theta +\phi )-3 \sin (\theta +3 \phi )-12 \sin (\theta -3 \phi )+\sin (\theta -5 \phi ))}{4 (\cos (2 \theta )+\cos (2 \phi )-2) (2 \cos (2 \theta )+\cos (2 \phi )-3)} \\
 \frac{(n-1) u_2 \csc ^5(\phi ) (24 \sin (\theta -\phi )+4 \sin (3 (\theta -\phi ))-12 \sin (3 \theta -\phi )+\sin (5 \theta -\phi )+3 \sin (3 \theta +\phi )-3 \sin (\theta +3 \phi )-12 \sin (\theta -3 \phi )+\sin (\theta -5 \phi ))}{16 \left(2 \sin ^4(\theta ) \csc ^4(\phi )+3 \sin ^2(\theta ) \csc ^2(\phi )+1\right)} \\
 u_2 \\
 -\frac{u_2 \sin (\theta ) \sin (\phi ) \left(\csc ^4(\theta )-\csc ^4(\phi )\right)}{\csc ^2(\theta )+2 \csc ^2(\phi )} \\
 -\frac{u_2 \left(\sin ^2(\theta ) \csc ^2(\phi )+2\right)}{2 \sin ^2(\theta ) \csc ^2(\phi )+1} \\
\end{array}
\right),
\]

\[
r_3 =
\left(
\begin{array}{ccccc}
 0 \\
 0 \\
 u_3 \\
 \frac{u_3 \left(2 \sin ^3(\theta ) \cos (\theta ) \csc ^2(\phi )-\sin (2 \phi )\right)}{\sin (\theta +\phi )+4 \sin ^2(\theta ) \cos (\theta ) \csc (\phi )} \\
 -\frac{u_3 \sin (\theta ) \csc (\phi ) (\sin (\theta ) \csc (\phi ) \sin (\theta +\phi )+2 \sin (2 \phi ))}{\sin (\theta +\phi )+4 \sin ^2(\theta ) \cos (\theta ) \csc (\phi )} \\
\end{array}
\right),
\]
where $u_1, u_2, u_3 \in \mathbbm{R}$ are free parameters.
To fix the orientation of $u_i$, i.e. $u_i \leq 0$ or $u_i \geq 0$, we need to solve the equation $h_k r_i \leq 0$, where $h_k$ is the remaining hyperplane not used in Eq. \eqref{eq:TwoBodyRaysSystems}.
As these rays are conically independent,
we have thus found all rays that characterize the cone of the two-body permutationally invariant Bell operator $\mathbbm{S}^{n,K}_{\varphi,\theta}$, as defined in Eq. \eqref{eq:PolPrimal}.

\section{Two-body cone optimization}\label{app:TwoBodyConeOpt}

The expression of the rays and lines at measurement parameters $(\varphi, \theta)= (\pi/6,5\pi/6)$ are
\begin{equation}
\label{eq:I6RaysAndLinesApp}
\begin{aligned}
    \vec{r}_1&=(-\sqrt{3} (n-1),0,1,-1,0),\\
    \vec{r}_2&=(-\sqrt{3} (n-1),0,-1,1,0),\\
    \vec{r}_3&=(0,0,0,-1,0),\\
    \vec{l}_1&=(-1,1,0,0,0),\\
    \vec{l}_2&=(0,0,1,-2,1).
\end{aligned}
\end{equation}
The numerical values of the rays and lines of Eq. \eqref{eq:I6RaysAndLinesApp} depend on the values of $n$. 
As the absolute value of largest elements of $r_1$ and $r_2$ increase linearly with $n$, we have imposed the following additional constraints on the Bell coefficients for numerical stability during the optimization:
\begin{equation}
\begin{aligned}
\label{eq:RayAndLineConstraints}
    0 \leq c_1 \leq 1/n, \;  0 \leq &c_2 \leq 1/n, \; 0 \leq c_3 \leq 1, \\
    \; -1 \leq c_4 \leq 1, &\; -1 \leq c_5 \leq 1.
\end{aligned}
\end{equation}
Since $\beta_Q/\beta_C$ is invariant under positive rescaling of $\vec{\alpha}$, 
restricting the $c_i$'s to the bounded interval of Eq.~\eqref{eq:RayAndLineConstraints} 
entails no loss of generality.
Any stoquastic Bell operator outside this region can 
be rescaled to lie within it without changing the quantum-classical gap. 
The Bell coefficients $\alpha = (-2,0,1/2,1,-1/2)$ correspond to the Bell operator of Eq. \eqref{eq:Ineq6Op}.
Numerically optimizing $\vec{c}$ for the quantum-classical gap $\beta_Q/\beta_C$, we obtain the coefficients listed in Table \ref{tb:OptGap}.

\section{Three-body lines}\label{app:ThreeBodyLines}

Similarly to the two-body polyhedron, the lines of the three-body polyhedron can be obtained through solving the kernel of $\mathbf{H}$, where each row is of the form
\begin{equation}
\begin{aligned}
\vec{\mathbf{S}}_{k,k+d} = \big( (\mathbf{S}_0)_{k,k+d}, (\mathbf{S}_1)_{k,k+d},(\mathbf{S}_{00})_{k,k+d}, (\mathbf{S}_{01})_{k,k+d},(\mathbf{S}_{11})_{k,k+d},(\mathbf{S}_{000})_{k,k+d},
\dots,
(\mathbf{S}_{111})_{k,k+d} \big).
\nonumber
\end{aligned}
\end{equation}
where $k \in \{0,n-d\}$ and $d \in \{1,2,3\}$.
$\mathbf{H}$ is thus a $(3n-3) \times 9$ matrix. 
We fix $n=10$ and obtain the following 
\begin{align*}
    \vec{l}_1 = (-\csc(\phi) \sin(\theta) , 1 , 0 , 0 , 0, 0,0,0,0), 
\end{align*}
\begin{align*}
        \vec{l}_2 = (0,0,\sin ^2(\theta ) \csc ^2(\phi ),-2 \sin (\theta ) \csc (\phi ),1, 0,0,0,0), 
\end{align*}
\begin{equation}
\begin{aligned}
    \vec{l}_3 = (0,0,0,0,0,&\sin ^3(\theta ) \left(-\csc ^3(\phi )\right),3 \sin ^2(\theta ) \csc ^2(\phi ),-3 \sin (\theta ) \csc (\phi ),1), \nonumber
\end{aligned}
\end{equation}
We have verified that $\vec{l}_1, \vec{l}_2, \vec{l}_3$ indeed are valid lines for all values of $n,\varphi,\theta$, by checking that $\vec{\mathbf{S}}_{k,k+d} \cdot \vec{l_j} = 0 \quad \forall \; k \in \{0,n-d\}, d \in \{1,2,3\}, j \in \{1,2,3\}$.

\section{Classical Bound}

The optimal classical strategy is deterministic. 
Since the number of deterministic strategies is finite, the optimal strategy can be found through exhaustive enumeration.
Due to permutation invariance, only the number of parties choosing a certain strategy is relevant for the value of the classical bound, not which ones. 
Therefore, we can parametrize all possible measurement outcomes by the following four parameters~\cite{tura_nonlocality_2015, guo_detecting_2023}
\begin{align}
    & a = | \{i\in \{0,\cdots,n-1\} \,|\, \langle \mathbf{M}_0^{i} \rangle=1, \langle \mathbf{M}_1^{i} \rangle=1\}| , \nonumber \\  
    & b = | \{i\in \{0,\cdots,n-1\} \,|\, \langle \mathbf{M}_0^{i} \rangle=1, \langle \mathbf{M}_1^{i} \rangle=-1\} |, \nonumber \\ 
    & c = | \{i\in \{0,\cdots,n-1\} \,|\, \langle \mathbf{M}_0^{i} \rangle=-1,\langle \mathbf{M}_1^{i} \rangle=1\} |, \nonumber \\ 
    & d = | \{i\in \{0,\cdots,n-1\} \,|\, \langle \mathbf{M}_0^{i} \rangle=-1,\langle \mathbf{M}_1^{i} \rangle=-1\} | , \nonumber
\end{align}
with the condition $a+b+c+d=n$.
The single-body correlators 
\begin{align}
    \mathcal{S}_{\mu_1} = \sum_{i=0}^{n-1} \langle \mathbf{M}_{\mu_1}^{i} \rangle \;,
    \nonumber
\end{align}
can thus be expressed as
\begin{align}
    \mathcal{S}_0 &= 
    a +b -c -d \;, \nonumber \\
    \mathcal{S}_1 &=a-b+c-d \;. \nonumber
\end{align}
The classical two-body correlator can be rewritten as
\begin{align}\label{eq:decM2nd}
    \mathcal{S}_{\mu_1 \mu_2} &=\sum_{\substack{i_1,i_2=0\\i_1\neq i_2}}^{n-1} \langle \mathbf{M}_{\mu_1}^{i_1}  \mathbf{M}_{\mu_2}^{i_2}\rangle = \sum_{i_1,i_2=0}^{n-1} \langle \mathbf{M}_{\mu_1}^{i_1}  \mathbf{M}_{\mu_2}^{i_2} \rangle - \sum_{i_1=0}^{n-1}  \langle \mathbf{M}_{\mu_1}^{i_1}  \mathbf{M}_{\mu_2}^{i_1} \rangle  \\
    &= \sum_{i_1=0}^{n-1}  \langle \mathbf{M}_{\mu_1}^{i_1}  \rangle \sum_{i_2=0}^{n-1}  \langle \mathbf{M}_{\mu_2}^{i_2}  \rangle - \sum_{i_1=0}^{n-1}  \langle \mathbf{M}_{\mu_1}^{i_1} \rangle \langle \mathbf{M}_{\mu_2}^{i_1} \rangle \;,
\end{align}
where in the last line we have used that the correlators of deterministic strategies factorize.
Expressed in the parameters $a,b,c,d$, we obtain
\begin{align*}
    \mathcal{S}_{00} 
    &= \mathcal{S}^2_0-n, \\
    \mathcal{S}_{11} &= \mathcal{S}^2_1-n, \\
    \mathcal{S}_{01} 
    &= \mathcal{S}_0 \mathcal{S}_1-(a-b-c+d)
\end{align*}
Similarly, the expressions for three-body correlators are
\begin{align*}
    \mathcal{S}_{000} 
    &= \mathcal{S}^3_0+2\mathcal{S}_0-3n\mathcal{S}_0, \\
    \mathcal{S}_{001} 
    &= \mathcal{S}_0\mathcal{S}_0\mathcal{S}_1 +2\mathcal{S}_1 -n\mathcal{S}_1-2(a-b-c+d)\mathcal{S}_0, \\
    \mathcal{S}_{011} 
    &= \mathcal{S}_0\mathcal{S}_1\mathcal{S}_1 +2\mathcal{S}_0-n\mathcal{S}_0-2(a-b-c+d)\mathcal{S}_1,\\
    \mathcal{S}_{111} &= \mathcal{S}^3_1+2\mathcal{S}_1-3n\mathcal{S}_1.
\end{align*}
The classical bound is the calculated as
\begin{align*}
     \beta_C = \min\limits_{\substack{a,b,c,d \\ a+b+c+d=n}}  \alpha_0  \mathcal{S}_0 + \alpha_1  \mathcal{S}_1 + \alpha_2 \mathcal{S}_{00} + \alpha_3  \mathcal{S}_{01} + \alpha_4 \mathcal{S}_{11} +  \alpha_5 \mathcal{S}_{000} +  \alpha_6 \mathcal{S}_{001} +  \alpha_7 \mathcal{S}_{011}+  \alpha_8 \mathcal{S}_{111}.
 \nonumber
\end{align*}

\section{Quantum-classical gap optimization}

In this section, we describe our approach to the optimization of the quantum-classical gap $f_{n,\varphi,\theta}(\vec{c})$ at fixed measurement parameters $(\varphi,\theta)$ and a fixed number of parties $n$. 
Here $\vec{c}$ denotes the vector of coefficients such that $\vec{\alpha} = \sum_i (\vec{c})_i \vec{g}_i$, where $(\vec{c})_i$ is the $i$-th element of $\vec{c}$ and $\vec{g}_i$ is either a ray or a line. 
For concreteness, let us fix the measurement parameters $(\varphi, \theta)=(\pi/4,-\pi/4)$ and the number of parties $n=10$. 
We randomly initialize the coefficient vector $\vec{c}_0 = (c_1,\ldots,c_m)$, where $m$ is the total number of rays and lines characterizing the stoquasticity cone $\mathcal{C}$.
We denote the $i$-th element of the vector $\vec{c}_0$ as $(\vec{c}_0)_i$.
Then for all $i\in \{1,\ldots,m\}$, we sweep over a fixed interval. 
We then select the value $(\vec{c}_0)_i$ that increases the objective function $f_{n,\varphi,\theta}(\vec{c})$ the most, denoted as $(\vec{c}_0)_i^*$. 
We replace $(\vec{c}_0)_i \rightarrow (\vec{c}_0)_i^*$. 
After this replacement, we denote our new vector as $\vec{c}_1$.
We repeat this until the objective function $f_{\varphi,\theta}(\vec{c})$ has converged after $T$ timesteps. 
We note that the sequence $\left(f_{n,\varphi,\theta}(\vec{c}_0),\ldots,f_{n,\varphi,\theta}(\vec{c}_t)\ldots,f_{n,\varphi,\theta}(\vec{c}_T)\right)$, where $\vec{c}_t$ denotes the coefficients after $t$ time steps, is a monotonically increasing sequence. 
The sequence thus necessarily converges to a local optimum.

For our optimization of the quantum-classical gap, we optimized within the stoquasticity cone of $(n,\varphi,\theta)=(10,\pi/4,-\pi/4)$.
After the optimization sequence, which is represented in Fig. \ref{fig:ManOptSeq}, the following Bell coefficients were found:
\begin{figure}[h!]
\centering
\begin{minipage}[b]{0.30\textwidth}
    \centering
    \includegraphics[width=\textwidth]{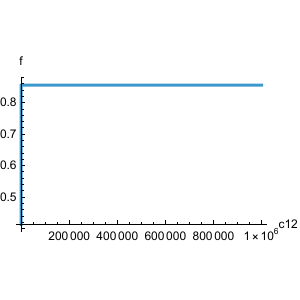}
    (a) $t=1$
\end{minipage}
\hfill
\begin{minipage}[b]{0.30\textwidth}
    \centering
    \includegraphics[width=\textwidth]{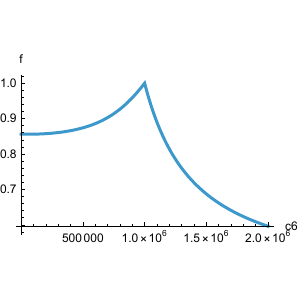}
    (b) $t=2$
\end{minipage}
\hfill
\begin{minipage}[b]{0.30\textwidth}
    \centering
    \includegraphics[width=\textwidth]{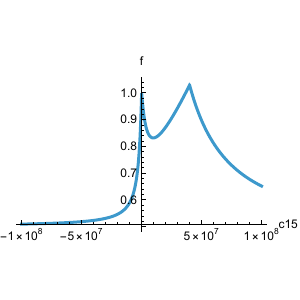}
    (c) $t=3$
\end{minipage}
\caption{Optimization sequence leading to the optimal Bell operator for $(\varphi,\theta)=(\pi/4,-\pi/4)$. For the first two optimization time steps, the domain was set to $[0,10^6]$ for the rays and $[-10^6,10^6]$ for the lines.
For the last optimization step, the domain was increased to $[0,10^8]$ for the rays and $[-10^8,10^8]$ for the lines. The coefficients $c_1,\ldots,c_{13}$ correspond to rays and $c_{14}, c_{15}, c_{16}$ correspond to lines.}\label{fig:ManOptSeq}
\end{figure}
\begin{equation}
\begin{aligned}
    \vec{\alpha} = (&-1.6\times 10^8,0.914772,3.9994\times 10^7,
    7.9988\times 10^7,3.9994\times 10^7,\\
    &-0.179356,-0.343468,-0.22854,-0.045208).
\end{aligned}
\end{equation}
As the quantum-classical gap is invariant under rescaling of $\alpha$, we obtain the following Bell coefficients 
\begin{equation}
    \vec{\alpha} = (-2.0,0,0.5,1,0.5,0,0,0,0),
\end{equation}
after dividing by $8*10^7$ and rounding to one decimal place, which show remarkable similarities compared to Eq. \eqref{eq:Ineq6Op}.
For $n=10,20,30,50$, the corresponding quantum-classical gaps are $f_{(10,\pi/4,-\pi/4)}=1.02904, f_{(20,\pi/4,-\pi/4)}=1.06576, f_{(30,\pi/4,-\pi/4)}=1.08556, f_{(50,\pi/4,-\pi/4)}=1.10791$. 
We would like to make a note that the initial starting point strongly influences the optimization sequence.  
For the parameters $(n,\varphi,\theta)=(8,\pi/4,-\pi/4)$ and a differently chosen initial starting point $\vec{c}_0'$, the resulting sequence does not converge to a quantum-classical gap greater than $1$, cf. Fig. \ref{fig:SubOptSeq}.
\begin{figure}[H]
\centering
\begin{minipage}[b]{0.30\textwidth}
    \centering
    \includegraphics[width=\textwidth]{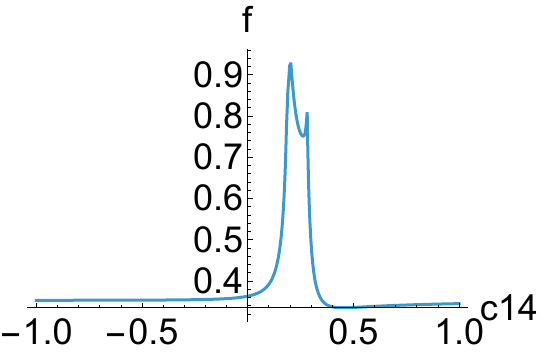}
    (a) $t=1$
\end{minipage}
\hfill
\begin{minipage}[b]{0.30\textwidth}
    \centering
    \includegraphics[width=\textwidth]{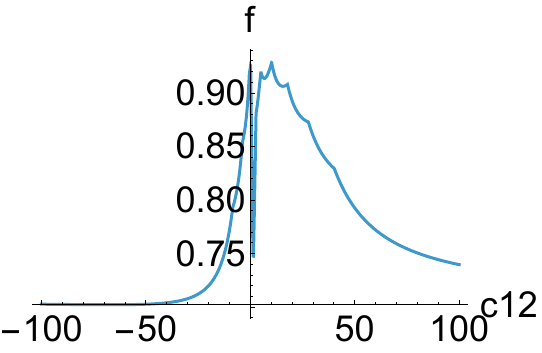}
    (b) $t=2$
\end{minipage}
\hfill
\begin{minipage}[b]{0.30\textwidth}
    \centering
    \includegraphics[width=\textwidth]{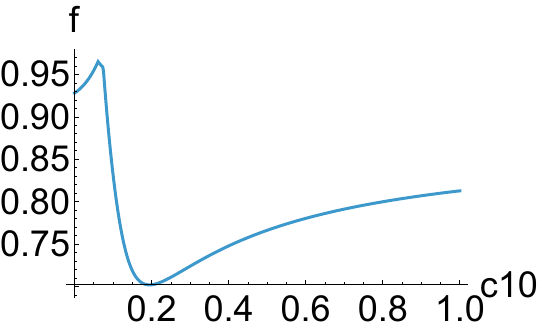}
    (c) $t=3$
\end{minipage}

\medskip

\begin{minipage}[b]{0.30\textwidth}
    \centering
    \includegraphics[width=\textwidth]{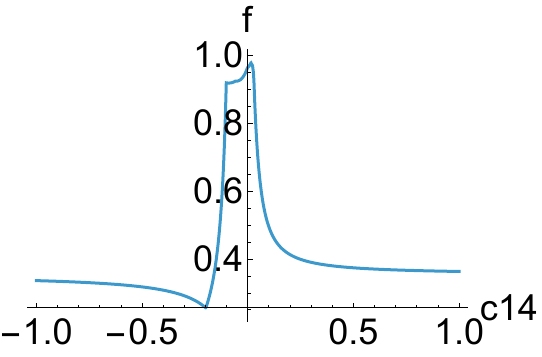}
    (d) $t=4$
\end{minipage}
\hfill
\begin{minipage}[b]{0.30\textwidth}
    \centering
    \includegraphics[width=\textwidth]{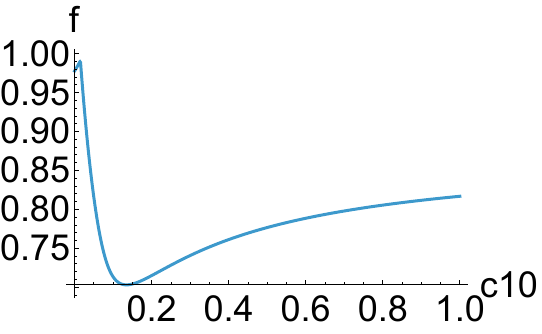}
    (e) $t=5$
\end{minipage}
\hfill
\begin{minipage}[b]{0.30\textwidth}
    \centering
    \includegraphics[width=\textwidth]{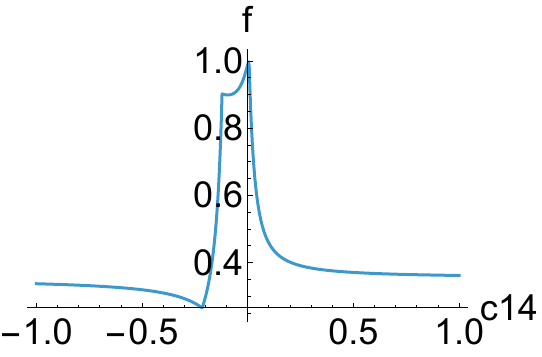}
    (f) $t=6$
\end{minipage}

\caption{Example of optimization sequence converging to $f_{n,\varphi,\theta}(\vec{c}) \leq 1$.}
\label{fig:SubOptSeq}
\end{figure}

\section{Code availability}\label{app:Nums}

The data and code supporting the findings of this study are available from the corresponding author upon reasonable request.

\end{document}